\newcommand{\fld}{\ensuremath{\Bbbk}}
\newcommand{\dd}{\ensuremath{\mathrm d}}
\newcommand{\dl}[2]{\frac{\partial_L{#1}}{\partial{#2}}}
\newcommand{\dr}[2]{\frac{\partial_R{#1}}{\partial{#2}}}
\newcommand{\dual}[1]{{#1}^*}
\newcommand{\Fun}[1]{\mathcal{F}(#1)}
\newcommand{\Funpos}[1]{\mathcal{F}_{\mathrm{w}+}(#1)}
\newcommand{\hdeg}[1]{{|#1|}}
\newcommand{\dg}[1]{|{#1}|}
\newcommand{\id}{1}
\newcommand{\ot}{\otimes}
\newcommand{\Ber}{\operatorname{Ber}}
\newcommand{\Hzw}{\ensuremath{\mathcal{H}_\mathrm{rel}}}
\renewcommand{\Im}{\operatorname{Im}}
\newcommand{\sus}{{\ensuremath{\uparrow }}}
\newcommand{\des}{{\ensuremath{\downarrow }}}
\newcommand{\Htpy}{\ensuremath{K}}
\newcommand{\htpy}{\ensuremath{k}}
\newcommand{\Hlgy}{\ensuremath{H}}
\newcommand{\Sfree}{\ensuremath{S_\mathrm{free}}}
\newcommand{\Sint}{\ensuremath{S_\mathrm{int}}}
\newtheorem{definition}{Definition}
\newtheorem{theorem}{Theorem}
\newtheorem{lemma}{Lemma}
\newtheorem{remark}{Remark}
\def\blfootnote{\xdef\@thefnmark{}\@footnotetext}
\title{Quantum $L_\infty$ Algebras and the Homological Perturbation Lemma}
\author{Martin Doubek}
\author{Branislav Jurčo}
\affil{Mathematical Institute, Faculty of Mathematics and Physics, Charles University, Prague 186 75, Czech Republic}
\author{Ján Pulmann\footnote{\href{mailto:Jan.Pulmann@unige.ch}{Jan.Pulmann@unige.ch}   \\}}
\affil{Section of Mathematics, University of Geneva, Switzerland}
\date{}
\begin{document}
\maketitle
\blfootnote{The last two authors are fully responsible for all the mistakes found in this paper.}
\vspace*{0.2cm}
\begin{center}
	\LARGE \textit{Dedicated to the memory of Martin Doubek.}
\end{center}
\vspace*{1cm}

\begin{abstract}
Quantum $L_\infty$ algebras are a generalization of $L_\infty$ algebras with a scalar product and with operations corresponding to higher genus graphs. We construct a minimal model of a given quantum $L_\infty$ algebra via the homological perturbation lemma and show that it's given by a Feynman diagram expansion, computing the effective action in the finite-dimensional Batalin--Vilkovisky formalism. We also construct a homotopy between the original and this effective quantum $L_\infty$ algebra.
\end{abstract}

\section{Introduction}
Quantum $L_\infty$ algebra on a graded vector space $V$ is given by a sequence of symmetric maps $\lambda^g_n: V^{\otimes n} \to V$ and an odd symplectic form $\omega: V\otimes V \to \fld$, satisfying some conditions. The map $\lambda^0_1: V \to V$ squares to zero, so that we can consider its cohomology $\Hlgy$. In this paper, we describe how to transfer the rest of the maps $\lambda^g_n$ to a new quantum $L_\infty$ algebra on $\Hlgy$.
\medskip

One way to transfer a quantum $L_\infty$ algebra is to use the Batalin--Vilkovisky algebra structure on $\Fun{V}$, the space of functions on $V$, which is induced by the odd symplectic form $\omega$. The quantum $L_\infty$ algebra on $V$ can be encoded into an \emph{action} $S\in\Fun{V}$ which solves the \emph{quantum master equation}
$$
\Delta e^{S/\hbar} = 0\,.
$$
Here, $\Delta$ is the Batalin--Vilkovisky Laplacian, a second order differential operator on $\Fun{V}$. 
One can then define the effective action by integrating over the complement of $\Hlgy$ in $V$, obtaining a function on the cohomology $\Hlgy$
$$e^{W/\hbar} = \int_{\Hlgy^{\mathrm{C}}} e^{S/\hbar}\,.$$
It is a simple consequence of the properties of the path integral that the resulting $W$ again solves the quantum master equation on $\Fun{\Hlgy}$. This approach has already been used in a similar context, either by directly defining $W$ as a diagram expansion \cite{BarannikovSolving,Chuang2010,Costello2010, Mnev2008}  or by defining the path integral~\cite{BraunMaunder}.
\medskip

In this paper, we will instead use the homological perturbation lemma, or HPL. Input data to HPL are two vector spaces with a choice of a deformation retract between them. For us, it will be maps
\begin{gather*} 
\begin{tikzcd}[ampersand replacement=\&]
(V,\lambda_1^0) \arrow[shift left=.9ex]{r}{p} \arrow[out=-160, in=160, loop, distance=2em]{}{\htpy} \& (\Hlgy,0) \arrow[shift left=.9ex]{l}{i}
\end{tikzcd}
\end{gather*}
satisfying some conditions. These data induce a similar retract between $\Fun{V}$ and $\Fun{\Hlgy}$. Then, we can interpret the Batalin--Vilkovisky Laplacian $\hbar\Delta$ as a perturbation to the differential $\lambda_0^1$. The HPL then transfers the perturbation to $\Fun{\Hlgy}$ and gives formulas for a new deformation retract. We will show that the perturbed projection map $P_1:\Fun{V}\to \Fun{\Hlgy}$ is given by a path integral and thus can be used to define an effective action.
Moreover, from the HPL one can easily extract an explicit homotopy between the original and the effective action.
\medskip

The homological perturbation lemma was discovered by Brown \cite{BrownLemma}, with similar formulas appearing already in work by Shih \cite{shih1962homologie}. The same result was then later published by Gugenheim \cite{gugenheim1972}, for other notable references see also Huebschmann \cite{HuebschmannLemma} and Lambe, Stasheff \cite{Lambe1987}.
The connection of the HPL and the path integral appears in the literature as well, see Section~\ref{SECRemarks} of this paper for a more detailed review.

Carlo Albert presented a work very similar to this paper at a Cargese conference in 2009 \cite{Albert}. There, he explained that one can see a scalar BV path integral as the HPL, but the work was never published.

In future work, we would like to extend the HPL approach to minimal models of algebras over Feynman transforms of modular operads and over cobar constructions of properads, e.g. the $IBL_\infty$ algebras \cite{MuensterSachsClassification, MuensterSachsQOC}.

\subsection{Organization of the paper}
In Section~\ref{SECBVandQL}, we start by introducing the Batalin--Vilkovisky formalism, serving as a heuristic for the path integral defined by the HPL. Then, we define quantum $L_\infty$ algebras as solutions to the 
quantum master equation.

In Section~\ref{SECMinimalModel}, we recall the homological perturbation lemma and we construct a deformation retract between $\Fun{V}$ and $\Fun{\Hlgy}$.

In Section~\ref{SECTRANSFER}, we apply the HPL to the constructed deformation retract and show that we obtain a quantum $L_\infty$ algebra on $\Fun{\Hlgy}$. We also define a homotopy of solutions of quantum master equation and show that the effective action $W$ is homotopic to the original action $S$.

In Section~\ref{SECRemarks}, we describe the relation of this paper to the mentioned works \cite{BarannikovSolving,Chuang2010,Costello2010, Mnev2008, BraunMaunder} in more detail.

\subsection{Notation and conventions}
For us, the field $\fld$ is always $\mathbb R$ or $\mathbb C$. All of the graded vector spaces are degree-wise finite-dimensional. We use a cohomological convention, with the differential of degree 1.
For $F$ an element of a graded vector space, we denote by $\hdeg{F}$ its degree. The suspension operator is defined by $(\sus V)_{i+1} = V_{i}$, desuspension is given by $(\des V)_{i-1} = V_{i}$ and $(\mathbf{r}V)_i = V_{-i}$. To shorten formulas, we sometimes use the Einstein summation convention.

\section{Batalin--Vilkovisky formalism and quantum $L_\infty$ algebras}
\label{SECBVandQL}
Batalin--Vilkovisky (or BV) formalism \cite{Batalin1981} was developed in quantum field theory as a tool to manipulate ill-defined
path integrals. Later, a geometric interpretation was given by Schwarz \cite{Schwarz1993}. We start this section
by reviewing its properties, which will serve as a heuristic for working with the homological perturbation lemma.

Given a gauge theory, with fields (including ghosts) $\phi^i$, one introduces antifield $\phi^\dagger_i$ for each field and extends the action $S[\phi]$ to $S[\phi, \phi^\dagger]$ such that $S[\phi, \phi^\dagger = 0] = S[\phi]$. The statistics of an antifield is opposite to that of a corresponding field, so one has an odd pairing on the space of fields and antifields. The space of fields is a Lagrangian subspace of this total space. 

The path integral of $\exp(iS[\phi]/\hbar)$ over fields is then generalized to a path integral over any Lagrangian subspace, with a hope that it is more amenable to a perturbative expansion. For the result to make sense, the BV path integral needs to be invariant under (at least small) changes of the Lagrangian subspace. This turns out to be true for $\Delta$-closed integrands, 
where $\Delta$ is a so-called BV Laplacian, defined using the odd pairing
$$\Delta = \pm \frac{\delta_\mathrm R}{\delta \phi^i}\frac{\delta_\mathrm L}{\delta \phi^\dagger_i}\,.$$
This is a second order differential operator which squares to zero. Thus, we will require that the weight $\exp(i S[\phi, \phi^\dagger]/\hbar)$ is $\Delta$-closed, which should be understood as a generalization of a gauge-invariance of $S$.

The BV Laplacian induces a bracket on the space of functionals, defined by a formula
\begin{equation}
\label{EQBVfirst}
\Delta (FG) = (\Delta F) G + (-1)^{\hdeg F} F \Delta G + (-1)^{\hdeg F} \{F, G \}\,.
\end{equation}
A simple calculation using this formula shows that
$$\Delta e^{iS/\hbar} = \frac{i}{\hbar} \left( \Delta S + \frac{i}{2\hbar}\{S, S\}  \right)e^{iS/\hbar}\,,$$
i.e. the condition that $e^{iS/\hbar}$ is $\Delta$-closed can be equivalently stated as
\begin{equation*}
2i\hbar \Delta S - \{S, S\} = 0\,,
\end{equation*}
which is the well-known \emph{quantum master equation}. 

In the following, will drop the factor $i$ in the exponent to simplify formulas, i.e. we will take a weight $e^{S/\hbar}$. Then, the master equation becomes
$$2\hbar\Delta S + \{S, S\} = 0\,.$$
Let us now denote by $V$ the space of fields and antifields and assume that it decomposes into $V = V'\oplus V''$ such that $\Delta$ also decomposes as $\Delta = \Delta' + \Delta''$ (this amounts to $V'$ and $V''$ being symplectic w.r.t. the odd pairing). Then we can integrate out the fields in $V''$ by choosing a Lagrangian subspace $L'' \subset V''$, thus obtaining a functional of the fields $V'$ only. If we apply this to $e^{S/\hbar}$, we can define an \emph{effective action} $W$ by
\begin{equation*}
e^{W/\hbar} \equiv \int_{L''} e^{S/\hbar}\,.
\end{equation*}
Note that this action will depend on the choice of $L''$, since $e^{S/\hbar}$ is not $\Delta''$-closed in general.

This effective action satisfies the master equation in the BV algebra on $\Fun{V'}$, which can be easily proven
\begin{align*}
\Delta' e^{W/\hbar}  = \Delta' \int_{L''}e^{S/\hbar} 
=  \int_{L''}\Delta' e^{S/\hbar} 
=  \int_{L''} (\Delta - \Delta '') e^{S/\hbar} = 0\,.
\end{align*}
Here, we moved $\Delta'$ under the integral because $\Delta'$
and the integral act on different variables. The last equality holds because $\Delta e^{S/\hbar}=0$ by master equation
and $\int_{L''} \Delta'' (\dots)=0$ follows from integration by parts, if $e^{S/\hbar}$ vanishes at infinity.
\medskip

We can also use the path integral to define an \emph{effective observable}, a morphism which takes functionals on $V$ to functionals on $V'$. Let $\Sfree$ be the classical, quadratic part of the action, i.e. a kinetic term, which determines the propagator. We will assume that $\Delta e^{\Sfree/\hbar} = 0$ and define the effective observable as
\begin{equation*}
F \mapsto \int_{L''}  F\,e^{\Sfree/\hbar}\,.
\end{equation*}
For us, it will be important that this is a chain map between $\hbar \Delta + \{\Sfree, -\}$ and $\hbar\Delta'$. This can be demonstrated by
	$$\hbar\Delta' \int_{L''} Fe^{\Sfree/\hbar} =
\int_{L''} \hbar\Delta' \left(Fe^{\Sfree/\hbar} \right)=
\int_{L''} \hbar\Delta \left(Fe^{\Sfree/\hbar}\right)\,.$$
Now we use the fact that for any degree 0 functional $A$, the map $F \mapsto e^{-A/\hbar} \hbar\Delta (F e^{A/\hbar})$ squares to zero. Moreover, if $A$ solves the quantum master equation, we have from Equation \eqref{EQBVfirst} 
$$e^{-A/\hbar} \hbar\Delta (F e^{A/\hbar}) = \hbar \Delta F + e^{-A/\hbar}\hbar\{e^{A/\hbar}, F\} = \hbar\Delta F + \{A, F\}\,,$$
where we used that the bracket is a derivation in both of its arguments. Thus, we get
$$\hbar\Delta' \int_{L''} Fe^{\Sfree/\hbar} = \int_{L''} \left( \hbar\Delta F + \{\Sfree, F\} \right)e^{\Sfree/\hbar}\,.$$
Note that the effective action $e^{W/\hbar}$ can be computed as an effective observable of $e^{(S-\Sfree)/\hbar}$.
\medskip  

We will also use a \emph{normalized effective observable}, which is defined by
$$F \mapsto e^{-W/\hbar} \int_{L''} F e^{S/\hbar}\,.$$
It also intertwines two differentials, this time $\hbar\Delta + \{S, -\}$ and $\hbar\Delta' + \{W, -\}'$. Here, $\{,\}'$ is the BV bracket coming from $\Delta'$.

\subsection{Finite-dimensional BV formalism}
We will now describe the mathematical framework we will use. Instead of the infinite-dimensional space of fields and antifields, we will take, as a model, a $\mathbb{Z}$-graded vector space which is finite-dimensional in every degree.
\begin{definition}
	\label{def:BValgebra}
	A \textbf{BV algebra} is a graded commutative associative algebra on a graded vector space $\mathcal F$
	with a bracket $\{, \}: \mathcal F^{\otimes 2}\to \mathcal F$ of degree 1 that satisfies
	\begin{align}
	\{F, G\} &= -(-1)^{(\hdeg F +1) (\hdeg G +1)} \{G, F\} \,, \nonumber\\
	\label{eq:Jacobi}
	\{ F, \{G, H\}\} &= \{ \{F, G\}, H\} + (-1)^{(\hdeg F+1)(\hdeg G +1) } \{G, \{F, H\}\}\,,  \\
	\{F, GH\} &= \{F, G\}H + (-1)^{(\hdeg F +1)\hdeg G} G \{F ,H  \} \nonumber
	\end{align}
	and a square zero operator $\Delta : \mathcal F \to \mathcal F$ of degree 1 such that 
	\begin{equation}
	\label{eq:BVgood}
	\Delta (FG) = (\Delta F) G + (-1)^{\hdeg{F}} F \Delta G + (-1)^{\hdeg F} \{F, G \}\,.
	\end{equation}
	For algebras with unit $1$, we will require $\Delta(1) = 0$. 
\end{definition}
Since the bracket can be 
defined using $\Delta$, one can define a BV algebra
using only $\Delta$. The Poisson and Jacobi identities of the bracket are then encoded in the so-called seven-term identity, which is a version of Leibniz identity for second-order differential operators
\begin{equation*}
\begin{split}
\Delta(FGH) =& \Delta (FG)H + (-1)^{\hdeg{G}\hdeg{H}} \Delta(FH)G + (-1)^{\hdeg F (\hdeg G+\hdeg H)}\Delta(GH) F \\
& - \Delta(F) GH - (-1)^{\hdeg F} F\Delta(G)H - (-1)^{\hdeg F +\hdeg G}FG\Delta (H)\,.
\end{split}
\end{equation*}
In the following, we will also use a compatibility between $\Delta$ and $\{, \}$ which can be derived from $\Delta^{2}(FG) = 0$
\begin{equation}
\label{eq:delta-and-bracket}
\Delta \{F,G\} = \{\Delta F, G\} + (-1)^{\hdeg F + 1 } \{F, \Delta G\}\,.
\end{equation}
\smallskip

Our main example of a BV algebra will be the algebra of functions on an odd symplectic vector space.
\begin{definition}\label{def:svs}
For a graded vector space $V$, an \textbf{odd symplectic form} of degree $-1$ is a nondegenerate graded-antisymmetric bilinear map $\omega: V\otimes V \to \fld$. A vector space equipped with such form is called an \textbf{odd symplectic vector space}.

If the graded vector space also has a differential $Q$ such that\footnote{This compatibility ensures that the cohomology of $Q$ will inherit a symplectic structure from $V$.}
$$\omega(1\otimes Q + Q \otimes 1 )= 0\,,$$
we call such vector space a \textbf{dg symplectic vector space}.
\end{definition}
To define the space of functions on $V$, we recall the definition of the dual.
\begin{definition}
	For a graded vector space $V$ , the \textbf{graded dual} $\dual{V}$ is defined as $(\dual{V})_i = \dual{(V_{-i})}$.
	\smallskip
	
	Let $f : V \to W$ be a map of graded vector spaces. Its \textbf{transpose} $\dual{f} : \dual{W} \to \dual{V}$ is defined by
	\begin{equation*}
	\dual{f}(\alpha) \equiv (-1)^{\hdeg{f}\hdeg{\alpha}} \alpha\circ f
	\end{equation*}
	for $\alpha \in \dual{W}$. Note that this implies  $\dual{(fg)} = (-1)^{\hdeg f \hdeg g}
	\dual{g}\dual{f}$.
	\smallskip

	The basis $\{\phi^i\}$ of $\dual{V}$ dual to a basis $\{e_i\}$ of $V$ is defined by
	$$\phi^i(e_j) = \delta^i_j\,.$$
\end{definition}
\begin{definition}
The \textbf{space of formal functions} on $V$ is defined as
\begin{equation*}
\Fun{V} \equiv \prod_{g\ge 0, n\ge 0}(\dual{V})^{\odot n}\otimes \fld \hbar^{g} \,.
\end{equation*}
We take a product over all nonnegative symmetric powers of $\dual{V}$ and all non-negative powers of $\hbar$. In other words, we work with formal power series in elements of $\dual{V}$ and in $\hbar$. By convention, $V^{\odot 0} = \fld$. The graded commutative product on $\Fun{V}$ is the $\hbar$-linear extension of the product on the space of symmetric powers of $\dual{V}$.

We will define a BV algebra structure on $\Fun{V}$ in coordinates. Choosing a basis $e_i$ of $V$, we get a
matrix 
\begin{equation*}
\omega_{ij} = \omega(e_i, e_j)\,.
\end{equation*}
The BV Laplacian $\Delta: \Fun{V} \to \Fun V$ is defined using $\omega^{ij}$, the inverse of $\omega_{ij}$, as
\begin{equation*}
\Delta F \equiv \frac 12\sum_{i, j} (-1)^{\hdeg{\phi^i}}\omega^{ij} \frac{\partial^2_L F}{\partial \phi^i \partial \phi^j}\,,
\end{equation*}
where $\phi^i\in \dual V$ is the dual basis of $e_i$. The corresponding bracket is 
\begin{equation*}
\{F,G\} \equiv \sum_{i, j} \frac{\partial_R F}{\partial \phi^i}
\omega^{ij} 
\frac{\partial_L G}{\partial \phi^j}
\,.
\end{equation*}
The partial derivatives are graded and $\hbar$-linear.
\end{definition}
The BV operator has a beautiful geometrical origin, due to Schwarz \cite{Schwarz1993} and Khudaverdian \cite{Khudaverdian}.
There, it is the divergence operator of Hamiltonian vector fields, with respect to some chosen volume form. In our case,
we have a canonical (up to a constant multiple) choice, given by the vector space structure on the graded manifold $V$. Then, the BV operator is defined
by
\begin{equation}\label{EQorigdivergence}
\Delta(F) \dd V= (-1)^{\hdeg F} \frac{1}{2} \mathcal L_{ \{F, -\} } \dd V\,,
\end{equation}
where $\dd V$ is a volume form induced by the coordinates on $V$ (see \cite[Equation~2.1 and Equation~2.7]{Khudaverdian}). 
We will also need the transformation property of $\Delta$ with respect to a symplectic diffeomorphism $\Phi$ 
\begin{equation}\label{EQBVTransformation}
 \Phi_* \circ \hbar \Delta \circ \Phi^* =  \hbar \Delta +\frac{1}{2} \{  \log \Ber (\partial \Phi) , -\} \,,
\end{equation} 
where $\Ber$ is the graded version of determinant \cite[Equation~2.11]{Khudaverdian}.
\medskip

Instead of volume forms, we will use semidensities, which are a more fundamental object. For us, they will be just
objects of the form $F\, \dd^{\frac 12} V$ with $F\in \Fun{V}$, which transform with a factor equal to the square root of the Berezinian.
We will write formally
\begin{equation}\label{EQorigdivergence}
\Delta(F) \dd^{\frac 12} V= (-1)^{\hdeg F}\mathcal L_{ \{F, -\} } \dd^{\frac 12} V\,.
\end{equation}
\begin{remark}
	The transformation property of $\Delta$ can be now seen as a simple compatibility of the Lie derivative $\mathcal L$ 
	with (symplectic) diffeomorphisms. Indeed, applying $\Phi_*\circ\mathcal L_{ \{F, -\} } = \mathcal L_{ \{\Phi_*(F), -\} } \circ \Phi_*$ on $\dd V$, 
	we get exactly Equation~\eqref{EQBVTransformation}
\end{remark}

\subsection{Existence of $e^{S/\hbar}$} \label{SSECExistence}
\begin{remark}
	This subsection explains how to define the quantum master equation rigorously, mainly dealing with issues concerning the powers of $\hbar$. The important parts are the formulas from Lemma~\ref{LEMex1} and Lemma~\ref{LEMex2}, and formulas at the end of this section, the rest is not very enlightening.
\end{remark}
Of course, the exponential $e^{S/\hbar}$ is not an element of $\Fun{V}$, since it contains arbitrary negative powers of $\hbar$. 
\begin{definition}
Allowing all the powers of $\hbar$, we get a space
\begin{equation*}
\mathcal F_{\text{arbitrary}}{(V)} \equiv \prod_{g \in \mathbb Z, n\ge 0}  (\dual{V})^{\odot n} \otimes \fld \hbar^{g} \,.
\end{equation*}
For a homogeneous vector in $(\dual{V})^{\odot n} \otimes \fld \hbar^{g}$, let us call the number $n$ the \textbf{polynomial degree} and the number $g$ the \textbf{genus}.	
\end{definition}
It is not possible to multiply any two elements of $\mathcal F_{\text{arbitrary}}{(V)}$, but we can single out a subspace of elements that are closed under multiplication 
\begin{equation*}
\mathcal F_{\text{finite}}{(V)} \equiv \left\{  v \in \mathcal F_{\text{arbitrary}}{(V)}
\Biggr\rvert \;\;\parbox{20em}{ the component $(v)_n$ of $v$ of polynomial degree $n$ has a lower bound on genus, for each $n$.} \right\}\,.
\end{equation*}
Elements $F, G$ of  $\mathcal F_{\text{finite}}{(V)}$ can be multiplied since, to the polynomial degree $n$ and genus $g$ of $FG$, only a finite number of components of $F$ and $G$ contribute. The BV algebra structure can be defined here by the same formulas as for $\Fun{V}$.
\medskip

To avoid discussing exponentials of constant terms, we will ignore them for now. Denoting the subspace of  $\mathcal F_{\text{finite}}{(V)}$ with no constant part as $\mathcal F_{\text{finite, n.c.}}{(V)}$, 
the exponential  of $A \in \mathcal F_{\text{finite, n.c.}}{(V)}$ is 
$$e^{A} = 1 + A + \frac{1}{2!} A^2 + \dots$$
This exponential (or any power series) is well defined, since only the first $k+1$ terms can contribute to the polynomial degree $k$ of the result. Thus, $e^A$ is finite and we can consider the quantum master equation.
\begin{lemma}\label{LEMex1}
	If $S$ is a degree 0 element of $\mathcal F_{\text{finite, n.c.}}{(V)}$, then 
	\begin{equation*}
	\Delta e^{S/\hbar} = \frac{1}{\hbar^2} e^{S/\hbar} \left( \hbar\Delta S + \frac 12 \{S, S\} \right)\,.
	\end{equation*}
\end{lemma}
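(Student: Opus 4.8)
The plan is to reduce everything to the single identity
$$\Delta e^{A} = e^{A}\left(\Delta A + \tfrac12\{A,A\}\right)$$
valid for any $A$ of degree $0$, and then to set $A = S/\hbar$. Note that $A$ is indeed of degree $0$ since $\hbar$ is a formal even parameter, and that $\Delta$ and the bracket are $\hbar$-linear, so $\Delta A = \tfrac1\hbar \Delta S$ and $\{A,A\} = \tfrac{1}{\hbar^2}\{S,S\}$. Substituting these into the displayed identity and pulling out $\hbar^{-2}$ reproduces the claimed formula. Thus the whole content is the degree-$0$ exponential identity, where, crucially, every Koszul sign appearing in \eqref{eq:BVgood} and in the Leibniz rule of Definition \ref{def:BValgebra} collapses to $+1$ because $\hdeg{A} = 0$.

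First I would establish the closed form
$$\Delta(A^{n}) = n\,A^{n-1}\,\Delta A + \binom{n}{2}\,A^{n-2}\,\{A,A\}$$
by induction on $n$. The base case $n=1$ is trivial, and the inductive step applies \eqref{eq:BVgood} to the product $A\cdot A^{n-1}$, giving $(\Delta A)A^{n-1} + A\,\Delta(A^{n-1}) + \{A, A^{n-1}\}$. The cross term is handled by the derivation property of the bracket (third line of Definition \ref{def:BValgebra}), which for a degree-$0$ argument reads $\{A, A^{n-1}\} = (n-1)A^{n-2}\{A,A\}$; feeding in the inductive hypothesis and using $\binom{n-1}{2} + (n-1) = \binom{n}{2}$ closes the induction.

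Next I would sum against $1/n!$. The first family of terms gives $\Delta A \sum_{n\ge1} A^{n-1}/(n-1)! = e^{A}\,\Delta A$, and the second gives $\tfrac12\{A,A\}\sum_{n\ge2} A^{n-2}/(n-2)! = \tfrac12 e^{A}\{A,A\}$, where one uses $\binom{n}{2}/n! = \tfrac{1}{2(n-2)!}$. Since $e^{A}$ is even it commutes past $\Delta A$ and past $\{A,A\}$ with no sign, yielding the displayed identity.

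The one point that genuinely needs care, rather than being a routine manipulation, is the legitimacy of applying $\Delta$ term by term to the infinite series $e^{A}$ and of rearranging the resulting double sum. This is where the hypothesis $S \in \mathcal F_{\text{finite, n.c.}}(V)$ enters: because $A$ has no constant part, the power $A^{n}$ has polynomial degree at least $n$, so each polynomial degree of $e^{A}$ receives contributions from only finitely many $n$, and $e^{A}$ is a well-defined element of $\mathcal F_{\text{finite}}(V)$, as already noted above. Moreover $\Delta$ is second order in the coordinates $\phi^i$ and hence lowers polynomial degree by exactly $2$ while preserving genus; consequently the polynomial-degree-$k$ part of $\Delta e^{A}$ depends only on the polynomial-degree-$(k+2)$ part of $e^{A}$, which is a finite sum. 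This makes the termwise action of $\Delta$ and the regrouping into the two combinatorial sums above unconditionally valid, so that no analytic convergence argument is required.
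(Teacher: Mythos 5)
Your proof is correct and follows essentially the same route as the paper's: establish $\Delta(S^n) = n S^{n-1}\Delta S + \binom{n}{2}S^{n-2}\{S,S\}$ from \eqref{eq:BVgood} together with the derivation property of the bracket, then sum the exponential series termwise. The paper states this for a general power series $f(S)$ in the compact form $\Delta f(S) = f'(S)\Delta S + \tfrac12 f''(S)\{S,S\}$ and applies it to $f(S)=e^{S/\hbar}$, which is the same computation as your substitution $A=S/\hbar$; your additional justification of the termwise application of $\Delta$ is sound and matches what the paper handles implicitly in its preceding discussion of $\mathcal F_{\text{finite}}(V)$.
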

\begin{proof}
	It is a simple consequence of Equation~\eqref{eq:BVgood} that 
	$\Delta S^n = n S^{n-1} \Delta S + \frac{n(n-1)}{2} \{S, S\}S^{n-2}$. Thus, for a power series $f(S) = \sum_{n\ge 0} f_n S^n$, we have
	$$\Delta( f(S)) = \sum_{n\ge 0} f_n (n S^{n-1} \Delta S + \frac{n(n-1)}{2} \{S, S\}S^{n-2}) = f'(S)\Delta S + \frac 12 f''(S) \{S, S\} \,.$$
\end{proof}
The next result we will need is the twisting of $\Delta$ by $e^{A/\hbar}$. 
\begin{lemma}\label{LEMex2}
	For $A\in \mathcal F_{\text{finite, n.c.}}{(V)}$ of degree 0 and $F \in \mathcal F_{\text{finite}}{(V)}$, the following identity holds
	\begin{equation}\label{EQTwistDelta}
	e^{-A/\hbar} \hbar\Delta \left( F e^{A/\hbar}\right) = \hbar \Delta F + \{A, F\} + \frac{1}{\hbar} \left( \hbar \Delta A + \frac 12 \{ A, A \}\right) F\,.
	\end{equation}
	Moreover, if we define the \textbf{twisted} BV Laplacian as
	$$T_A(F) \equiv \hbar \Delta F + \{A, F\}\,,$$
	then
	$T_A^2 = 0$ iff $A$ solves the quantum master equation.
\end{lemma}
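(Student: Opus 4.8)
The plan is to prove the identity \eqref{EQTwistDelta} by a direct application of the BV Leibniz rule \eqref{eq:BVgood}, and then to obtain the ``moreover'' statement by computing $T_A^2$ explicitly.

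First I would expand $\hbar\Delta(Fe^{A/\hbar})$ using \eqref{eq:BVgood} with the two factors $F$ and $e^{A/\hbar}$, which produces three terms: $\hbar(\Delta F)e^{A/\hbar}$, the term $(-1)^{\hdeg{F}}\hbar F\,\Delta e^{A/\hbar}$, and the bracket term $(-1)^{\hdeg{F}}\hbar\{F, e^{A/\hbar}\}$. Into the second term I would insert Lemma \ref{LEMex1} (with $S=A$), and for the third I would use that $\{F,-\}$ is a derivation, so that $\{F, e^{A/\hbar}\} = \frac{1}{\hbar}\{F,A\}e^{A/\hbar}$. After left-multiplying by $e^{-A/\hbar}$, which is even and hence central, the exponentials cancel. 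The remaining work is purely a matter of signs and ordering: using the antisymmetry of the bracket one rewrites $(-1)^{\hdeg{F}}\{F,A\}$ as $\{A,F\}$, and since $\hbar\Delta A + \frac{1}{2}\{A,A\}$ is odd (degree $1$) one commutes it past $F$ at the cost of a sign $(-1)^{\hdeg{F}}$ that cancels the one already present. Tracking the powers of $\hbar$ produced by Lemma \ref{LEMex1} then yields the final term claimed in \eqref{EQTwistDelta}.

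For the ``moreover'' part I would compute $T_A^2(F) = T_A\bigl(\hbar\Delta F + \{A,F\}\bigr)$ term by term. The pure-$\Delta$ piece vanishes because $\Delta^2=0$. The two mixed pieces $\hbar\Delta\{A,F\} + \hbar\{A,\Delta F\}$ collapse to $\hbar\{\Delta A, F\}$ after invoking the compatibility \eqref{eq:delta-and-bracket} (the cross term $\{A,\Delta F\}$ cancels against its counterpart, using $\hdeg{A}=0$). The remaining piece $\{A,\{A,F\}\}$ is handled by the graded Jacobi identity \eqref{eq:Jacobi} applied with arguments $A,A,F$, which gives $\{A,\{A,F\}\} = \frac{1}{2}\{\{A,A\},F\}$. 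Collecting everything, $T_A^2(F) = \{\hbar\Delta A + \frac{1}{2}\{A,A\},\,F\}$.

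Finally, $T_A^2=0$ is thus equivalent to $\{\,\hbar\Delta A + \frac{1}{2}\{A,A\},\,-\,\}=0$. One direction is immediate: if $A$ solves the quantum master equation the bracket argument is zero. For the converse I would use nondegeneracy of $\omega$: since $\{B,\phi^j\} = \sum_i \frac{\partial_R B}{\partial\phi^i}\omega^{ij}$ and $\omega^{ij}$ is invertible, the operator $\{B,-\}$ annihilates every coordinate $\phi^j$ only when all first derivatives of $B$ vanish, i.e. when $B$ is a constant; but $B = \hbar\Delta A + \frac{1}{2}\{A,A\}$ has odd degree $1$, whereas the constants $\prod_g \fld\hbar^g$ sit in degree $0$, so the only constant of degree $1$ is $0$. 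This degree-plus-nondegeneracy argument, which upgrades ``the Hamiltonian vector field of $B$ vanishes'' to ``$B=0$,'' is the one genuinely delicate point; everything else is routine graded bookkeeping.
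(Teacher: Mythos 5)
Your proof is correct and follows essentially the same route as the paper: the identity is an application of the Leibniz rule \eqref{eq:BVgood} combined with Lemma \ref{LEMex1}, and $T_A^2(F)=\{\hbar\Delta A+\tfrac12\{A,A\},F\}$ is obtained from the compatibility \eqref{eq:delta-and-bracket} and the Jacobi identity \eqref{eq:Jacobi}, with nondegeneracy of $\omega$ forcing the odd constant to vanish. The paper's proof is simply a terser version of the same computation, and your expanded justification of the last step (a Hamiltonian with vanishing Hamiltonian vector field is a constant, and constants sit in degree $0$) fills in exactly the detail the paper leaves implicit.
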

\begin{proof}
	The first equation is an immediate consequence of Equation~\eqref{eq:BVgood}. The square of $T_A$ can be written as
	$$(T_A)^2(F) = \hbar \Delta \{A, F \} + \{A, \hbar \Delta F + \{A, F\}\} = \{\hbar \Delta A + \frac{1}{2} \{A, A\}, F \}\,,$$
    where we used Equation~\eqref{eq:delta-and-bracket} and the identity $2\{A, \{A, F\}\} =\{ \{A, A\}, F\}$, which follows from the Jacobi identity~\eqref{eq:Jacobi}. However, since $\omega$ is non-degenerate, this means that $\hbar \Delta A + \frac{1}{2} \{A, A\}$ is an odd constant, which can only be 0.
\end{proof}
Note that it is also possible to twist step by step. Take $A, B \in \mathcal F_{\text{finite, n.c.}}{(V)} $ such that $A$ satisfies the master equation.
Then we can twist $\hbar\Delta + \{A, -\}$ by $B$, which will satisfy
$$e^{-B/\hbar}T_A(Fe^{B/\hbar}) = T_{A+B}(F)$$
iff $A+B$ satisfies the quantum master equation
$$ \hbar e^{-B/\hbar}(\hbar \Delta + \{A, - \}) e^{B/\hbar}  = \hbar\Delta B + \{A, B\} + \frac{1}{2}\{B, B\} = 0\,.$$

We finish by introducing the \textbf{weight grading} of Braun and Maunder \cite{BraunMaunder}.
\begin{definition}
	The \textbf{weight} of an element $v\in \mathcal F_{\text{arbitrary}}{(V)}$ of polynomial degree $n$ and genus $g$ is $w=2g + n$. The space $\Funpos{V}$ is defined as
	$$ \Funpos{V} \equiv \prod_{w  \ge 1} \left( \bigoplus_{2g+n = w} \hbar^{g} (\dual{V})^{\odot n} \right) \,, $$
	i.e. elements of positive weight with only finitely many elements of each weight. 
\end{definition}
Since the multiplication is of weight zero, the space $\Funpos{V}$ is closed under multiplication. Moreover, it is also closed under taking arbitrary power series without a constant coefficient.   Note that $\Funpos{V}$ contains components of polynomial degree 0.
\medskip

The weight grading is useful because it is preserved by $\hbar \Delta$ and consequently also by the path integral. In other words, we will show that a path integral of an element of $\Funpos{V}$ is again a well-defined element here, and it makes sense to talk about its logarithm.
\medskip

Equipped with these notions, we can put some conditions on the action $S\in\Fun{V}$. Let us decompose it to the part of polynomial degree 2 and genus 0, called $\Sfree$, and the rest $\Sint = S - \Sfree$. The part $\Sfree$ has weight 2. \textbf{In the following, we will assume that $\Sint/\hbar$ is an element of $\Funpos{V}$, i.e. $\Sint$ is in weight $3$ and more.} For $\Sint \in \Fun{V}$, this means it starts in polynomial degree 3 for genus 0 and in polynomial degree 1 in genus 1. Since the constant part of $\Sint/\hbar$ is in weight 1 or more, all the expressions in Lemma~\ref{LEMex1} and Lemma~\ref{LEMex2} are well defined and we can apply the lemmas to $S$. Thus, we have the master equation for $\Sfree + \Sint$
$$\hbar \Delta (\Sfree + \Sint) + \frac 12 \{\Sfree + \Sint, \Sfree + \Sint\} = 0\,.$$
If $S$ solves the quantum master equation, then the quadratic, genus $0$ part of the quantum master equation is just $\{\Sfree, \Sfree \}= 0$, which means that $\{\Sfree, - \}$ squares to 0. Moreover, since $\Sfree$ is of degree 0, $\Delta \Sfree$ is a constant of degree 1, i.e. zero. Thus, $\Sfree$ is also a solution of the master equation. Following the remark after Lemma~\ref{LEMex2}, this means that we have a differential
$$T_{\Sfree} = \hbar \Delta + \{\Sfree, -\}\,,$$
which can be twisted to the full differential
$$e^{-\Sint/\hbar}T_{\Sfree}(F e^{\Sint/\hbar}) = \hbar \Delta F + \{\Sfree + \Sint, F\}\,.$$
The master equation then reduces to
\begin{equation}\label{EQMasterEqSfree}
T_{\Sfree}(e^{\Sint/\hbar}) = e^{-\Sfree/\hbar} \hbar \Delta e^{(\Sfree + \Sint)/\hbar} = 0 \,,
\end{equation}
or equivalently,
\begin{equation*}
\hbar \Delta \Sint + \{\Sfree, \Sint \} + \frac{1}{2} \{ \Sint, \Sint \}=0\,.
\end{equation*}

\subsection{Quantum master equation and quantum $L_\infty$ algebras}
The first appearance of a quantum $L_\infty$ algebra was in the correlation functions of the closed string field theory of Zwiebach 
\cite{zwiebach-closed}.  He defined \emph{string functions}, graded symmetric multilinear maps from the relevant Hilbert space $\Hzw$ into $\mathbb C$
and proved that they satisfy a series of identities \cite[Equation~4.10]{zwiebach-closed}, called the \emph{main identity}.
Then, he showed that this identity is equivalent to the quantum master equation (see \cite[Section~4.4]{zwiebach-closed})
for an action $S$ encoding all the string functions.
\medskip

The main identity generalizes the defining relations of a cyclic $L_\infty$ algebra. This was elaborated on by Markl in \cite{markl2001loop},
where he defined loop homotopy Lie algebras as maps satisfying the main identity. In this work, he showed that loop homotopy Lie algebras (or, as we will call them, quantum $L_\infty$ algebras) can be viewed as algebras over the Feynman transform of the modular operad $\operatorname{Mod}(Com)$, generalizing the cobar construction of $L_\infty$ algebras. 

We will define a quantum $L_\infty$ algebra to be a solution of the quantum master equation. To get an action of degree $0$, we shift 
$\Hzw$ twice. Then, the inner product $\langle, \rangle$ on $\Hzw$ defines a degree $-1$ symplectic form  on $V= \des\des \Hzw$ by
$$\omega(v_1, v_2) \equiv (-1)^\hdeg{v_1} \langle \sus\sus v_1, \sus\sus v_2 \rangle\,.$$
\begin{definition}
	A \textbf{quantum $L_\infty$ algebra}, or a \textbf{loop homotopy Lie algebra}, on a symplectic vector space $(V, \omega)$, is given by a degree 0 element $S\in \Fun{V}$ that satisfies the quantum master equation
	\[2\hbar\Delta S + \{S, S\} = 0\,.\]
	 Moreover, we require that the genus 0 part of $S$ is at least quadratic and genus 1 part is at least linear.
\end{definition}
A definition of a quantum $L_\infty$ algebras as solutions to the master equation appeared in the work of Braun and Lazarev \cite[Section~6]{BraunLazarev}.

Any quantum $L_\infty$ algebra comes with a differential on $V$ compatible with the symplectic structure, corresponding
to the differential $\{ \Sfree, - \}$ (see also Lemma~\ref{LEMQSfree}).
\begin{lemma}
	Given a quantum $L_\infty$ algebra, let us decompose the action $S$ as
	\begin{equation*}
	S = \sum_{n\ge 2, g \ge 0} \hbar^g \frac{s_{n}^g}{n!} \in \Fun{V}\,.
	\end{equation*}
	with $s_n^{g}$ a graded symmetric function taking $n$ vectors from $V$.
	Then, the map $Q\colon V\to V$ defined by
	\[ s_2^0(v_0, v_1) = (-1)^{\hdeg{v_0}} \omega(v_0, Qv_1)\,, \]
	gives a dg symplectic vector space structure  $(V, \omega, Q)$  (see Definition~\ref{def:svs}).
\end{lemma}
\begin{proof}
	The fact that $Q^2 = 0$ comes from looking at genus $0$, quadratic part of the quantum master equation. The compatibility 
	of $Q$ with $\omega$ follows from the symmetry of $s_2^0$.
\end{proof}

\section{Minimal model}
\label{SECMinimalModel}
Similarly to the definition of $Q$, one can define maps $\lambda^0_n : (V)^{\odot n} \to V $ by \[s^0_{n+1} (v_0, \dots, v_n) = (-1)^{\hdeg{v_0}} \omega(v_0, \lambda^0_n(v_1, \dots, v_n))\,.\]
After a suitable shift (see \cite{markl2001loop}), $\lambda_2^0$ becomes an antisymmetric bracket whose failure to satisfy the Jacobi identity is equal to $ Q \circ \lambda_3^0 - \lambda_3^0\circ(1\otimes 1\otimes Q + 1\otimes Q \otimes 1 + Q\otimes 1 \otimes 1)$.
Moreover, $\lambda_2^0$ is compatible with the differential $Q$, and thus 
descends to a Lie bracket on $\Hlgy$, the cohomology of $V$ w.r.t. $Q$.
\medskip

The task of finding a \emph{minimal model} is to encode the higher operations from $V$ to $\Hlgy$ as well, introducing a quantum $L_\infty$ algebra on $\Hlgy$ compatible with the one on $V$. This makes sense because, thanks to the compatibility $\omega(1\otimes Q + Q \otimes 1 )= 0$, the cohomology $\Hlgy$ inherits a symplectic structure and thus we have a BV algebra structure on $\Fun{\Hlgy}$. 
Minimal model of a quantum $L_\infty$ algebra is therefore given by an action $W\in \Fun{\Hlgy}$ satisfying the quantum master equation $2\hbar \Delta' W + \{W, W\}' = 0$. 

For $L_\infty$ algebras, one requires that there is a quasi-isomorphism connecting the original algebra and the minimal model. In our case, we also have the odd symplectic structure, but requiring that we obtain a symplectomorphism is a very restrictive notion (this is what Kajiura defines as a minimal model \cite[Definition~2.13]{Kajiura}). We give partial answers in Section~\ref{SSECHomotopies} and Section~\ref{SSECMorphism}, using the notion of homotopy of solution of the quantum master equation.
\subsection{Homological perturbation lemma}
Our aim is to define a path integral using the homological perturbation lemma, or HPL. We start by
reviewing HPL, the standard reference is a paper by Crainic \cite{Crainic2004}.
\begin{definition} \label{DEFSS}
	A \textbf{standard situation (SS)} is a pair $(V,Q)$ and $(W,e)$ of dg vector spaces, a pair $p$ and $i$ of their morphisms and a homotopy $\htpy$ between $ip$ and $1_V$
	$$\begin{tikzcd}
	(V,Q) \arrow[shift left=.9ex]{r}{p} \arrow[out=-160, in=160, loop, distance=2em]{}{\htpy} & (W,e) \arrow[shift left=.9ex]{l}{i}
	\end{tikzcd}$$
	that satisfy the following:
	\begin{gather*}
	Q^2=0, \quad e^2=0, \quad \dg{Q}=\dg{e}=1, \\
	pQ=ep, \quad \dg{p}=0,\\
	ie=Qi, \quad \dg{i}=0,\\
	ip-\id_V=Q\htpy+\htpy Q, \quad \dg{\htpy}=-1.
	\end{gather*}
\end{definition}

\begin{definition}
	A \textbf{deformation retract (DR)} is an SS such that 
	$$pi=\id_W.$$
\end{definition}

\begin{definition}
	A \textbf{special deformation retract (SDR)} is a DR such that the following \textbf{annihilation conditions} are met:
	\begin{gather*}
	p\htpy=0, \quad \htpy i=0, \quad \htpy^2=0.
	\end{gather*}
	With this conditions, $ip$, $-\htpy Q$ and $-Q\htpy$ are three projectors such the direct sum of their images gives the whole space $V$.
\end{definition}

If we have a standard situation, we can perturb the differential on $V$ to a new one, requiring that it still squares to zero. The perturbation lemma then gives explicit formulas for a new perturbed standard situation.
\begin{theorem}[Perturbation lemma] \label{THMPL}
	Consider an SS as above:
	\begin{gather} \label{EQSSInPertLemma}
	\begin{tikzcd}[ampersand replacement=\&]
	(V,Q) \arrow[shift left=.9ex]{r}{p} \arrow[out=-160, in=160, loop, distance=2em]{}{\htpy} \& (W,e) \arrow[shift left=.9ex]{l}{i}
	\end{tikzcd}
	\end{gather}
	A \textbf{perturbation} $\delta:V\to V$ of the differential $Q$ is a linear degree $1$ map such that 
	$$(Q+\delta)^2=0.$$
	Equivalently,
	$$\delta^2+\delta Q+Q\delta=0.$$
	Let $\delta$ be a perturbation of $Q$ which is small in the sense that
	$$(1-\delta \htpy)^{-1} \equiv \sum_{i=0}^\infty (\delta \htpy)^i$$
	is a well defined linear map $V\to V$.
	
	Denote 
	\begin{gather}
	Q' \equiv Q+\delta, \nonumber \\
	e' \equiv e+p(1-\delta \htpy)^{-1}\delta i = e+p\delta(1-\htpy\delta)^{-1}i, \nonumber \\
	p' \equiv p+p(1-\delta \htpy)^{-1}\delta \htpy = p(1-\delta \htpy)^{-1}, \nonumber \\
	i' \equiv i+\htpy(1-\delta \htpy)^{-1}\delta i = (1-\htpy\delta)^{-1}i, \nonumber \\
	\htpy' \equiv \htpy+\htpy(1-\delta \htpy)^{-1}\delta \htpy = \htpy(1-\delta \htpy)^{-1}, \nonumber \\
	\begin{tikzcd}[ampersand replacement=\&]
	(V,Q') \arrow[shift left=.9ex]{r}{p'} \arrow[out=-160, in=160, loop, distance=2em]{}{\htpy'} \& (W,e') \arrow[shift left=.9ex]{l}{i'}\,.
	\end{tikzcd} \label{EQSSPerturbedInPertLemma}
	\end{gather}
	Then:
	\begin{enumerate}
		\item \eqref{EQSSPerturbedInPertLemma} is an SS.
		\item If $p$ is a quasi-isomorphism (equivalently: $p$ induces surjective map on cohomology, or $i$ is a quasi-isomorphism, or $i$ induces an injective map on cohomology), then $p'$ is a quasi-isomorphism (equivalently: $p'$ induces surjective map on cohomology, or $i'$ is a quasi-isomorphism, or $i'$ induces an injective map on cohomology).
		\item If \eqref{EQSSInPertLemma} is a DR, then \eqref{EQSSPerturbedInPertLemma} is a DR iff
		$$p(A\htpy^2A+A\htpy+\htpy A)i=0\,,$$
		where $A\equiv(1-\delta \htpy)^{-1}\delta$.
		\item If \eqref{EQSSInPertLemma} is an SDR, then \eqref{EQSSPerturbedInPertLemma} is an SDR.
	\end{enumerate}
\end{theorem}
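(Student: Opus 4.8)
The plan is to treat the five defining relations of a standard situation as identities to be verified for the perturbed data, reducing each to the original SS relations and the perturbation equation $\delta^2+\delta Q+Q\delta=0$ by resumming geometric series. First I would record the bookkeeping identities that make the resummation work. Writing $t\equiv(1-\delta\htpy)^{-1}$ and $s\equiv(1-\htpy\delta)^{-1}$, so that $p'=pt$, $i'=si$ and $\htpy'=\htpy t=s\htpy$, the telescopings $\htpy(\delta\htpy)^n=(\htpy\delta)^n\htpy$ and $(\delta\htpy)^n\delta=\delta(\htpy\delta)^n$ yield at once the commutations $\htpy t=s\htpy$ and $t\delta=\delta s=A$, together with the equality of the two forms quoted for each perturbed map. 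From the resolvent recursions $t=\id_V+\delta\htpy t=\id_V+t\delta\htpy$ and $s=\id_V+\htpy\delta s=\id_V+s\htpy\delta$ one gets the fixed-point equations $A=\delta+\delta\htpy A=\delta+A\htpy\delta$, which are the engine of every subsequent computation.

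For part (1), degrees are immediate and $(Q')^2=0$ is the perturbation hypothesis. The relations $p'Q'=e'p'$ and $i'e'=Q'i'$ I would verify by substituting the series forms and pushing $Q$ through $t$ (resp.\ $s$) using $ep=pQ$, $ie=Qi$ and $Q\delta=-\delta Q-\delta^2$; the cross terms produced are cancelled precisely by $A=\delta+\delta\htpy A$. The genuinely laborious identities are $(e')^2=0$ and the homotopy relation $i'p'-\id_V=Q'\htpy'+\htpy'Q'$. For $(e')^2$ one expands $(e+pAi)^2$, replaces $ep$, $ie$, $ip$ by $pQ$, $Qi$, $\id_V+Q\htpy+\htpy Q$, and must show that the resulting combination of $QA+AQ$, $A^2$ and $A\htpy A$ collapses; this forces one to establish first a clean closed formula for $QA+AQ$ in terms of $A$, $\htpy$ and $ip$, and this is where I expect the main obstacle to lie. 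The homotopy relation is then handled by the same manipulation on $V$, so I would organize both computations around that single auxiliary identity rather than expanding blindly.

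Part (2) is the conceptually delicate point. Once part (1) gives that the perturbed data is again an SS, the homotopy $i'p'-\id_V=Q'\htpy'+\htpy'Q'$ between the chain maps $i'p'$ and $\id_V$ yields $i'_*p'_*=\id_{H(V,Q')}$ on cohomology, so $p'_*$ is automatically injective and $i'_*$ surjective, and all the parenthetical equivalences then hold formally, exactly as for the unperturbed pair. The substantive content is the implication that $p$ being a quasi-isomorphism forces $p'$ to be one, and here I would exploit smallness: in the standard setting where the Neumann series converges because $\delta$ strictly raises a complete, exhaustive filtration, the perturbation induces an isomorphism of associated spectral sequences, giving $H(V,Q)\cong H(V,Q')$ compatibly with $p$ and $p'$. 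Equivalently one may read ``$p$ is a quasi-isomorphism'' as contractibility of the mapping cone of $p$ and note that contractibility is stable under the small perturbation by re-applying the SS-stability of part (1) to the cone; this is the route I would take, flagging it as the place where finite-dimensionality and the precise meaning of smallness enter.

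Parts (3) and (4) follow by direct computation from the preliminary identities. For (3) I would compute $ts=s+t-\id_V+A\htpy^2A$ by separating the blocks $n=0$; $n\ge1,m=0$; and $n,m\ge1$ of the double series $ts=\sum_{n,m\ge0}(\delta\htpy)^n(\htpy\delta)^m$, the last block meeting in a factor $\htpy^2$. Then, using $pi=\id_W$ together with $psi=\id_W+p\htpy Ai$ and $pti=\id_W+pA\htpy i$, one obtains $p'i'-\id_W=p(A\htpy^2A+A\htpy+\htpy A)i$, so $p'i'=\id_W$ precisely when this vanishes, which is the stated criterion. For (4) the annihilation conditions $p\htpy=0$, $\htpy i=0$, $\htpy^2=0$ kill $A\htpy^2A$, $A\htpy$ and $\htpy A$ between $p$ and $i$, so the criterion of (3) holds and the perturbed data is a DR; finally $\htpy s=\htpy$ (from $\htpy^2=0$) gives $(\htpy')^2=s(\htpy s)\htpy=s\htpy^2=0$, while $p\htpy=0$ and $\htpy i=0$ make every term of the double series for $p'\htpy'$ and for $\htpy'i'$ carry an adjacent $p\htpy$, $\htpy i$, or an internal $\htpy^2$, whence $p'\htpy'=0$ and $\htpy'i'=0$, so all annihilation conditions are preserved and (4) holds.
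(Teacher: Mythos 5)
The paper gives no argument for Theorem~\ref{THMPL} at all --- its ``proof'' is the single line ``See \cite{Crainic2004}'' --- so your proposal cannot be matched against an in-paper computation; what you have written is essentially a reconstruction of Crainic's own direct verification. Judged on its own terms, the bookkeeping identities ($\htpy t=s\htpy$, $t\delta=\delta s=A$, $A=\delta+\delta\htpy A=\delta+A\htpy\delta$) are correct, and parts (3) and (4) are carried out completely and correctly: the block decomposition $ts=s+t-\id_V+A\htpy^2A$ of the double Neumann series, the identities $psi=pi+p\htpy Ai$ and $pti=pi+pA\htpy i$, and the consequences $t\htpy=\htpy$, $\htpy s=\htpy$ of $\htpy^2=0$ are exactly what is needed to obtain the criterion $p(A\htpy^2A+A\htpy+\htpy A)i=0$ and the preservation of the annihilation conditions.

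Parts (1) and (2), however, are plans rather than proofs, and (2) contains the one step that would not go through as written. For (1) you correctly isolate the crux --- a closed formula for $QA+AQ$ in terms of $A$, $\htpy$ and $ip$, derived from $\delta^2+\delta Q+Q\delta=0$ and $Q\htpy+\htpy Q=ip-\id_V$ --- but you do not derive it, and without it neither $(e')^2=0$ nor the perturbed homotopy relation is actually verified; this identity is the bulk of any written-out proof and should be stated and proved explicitly. For (2), the formal part (that $i'_*p'_*=\id$ on cohomology makes $p'_*$ injective and renders all the parenthetical conditions equivalent) is fine, but your route to the substantive implication imports hypotheses the theorem does not grant: ``smallness'' here means only that $\sum_i(\delta\htpy)^i$ converges as a linear map, with no complete exhaustive filtration raised by $\delta$, so the spectral-sequence comparison of $H(V,Q)$ and $H(V,Q')$ is not available (and indeed these cohomologies need not be compared at all --- the claim is about $p'\colon(V,Q')\to(W,e')$). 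The mapping-cone variant you mention is closer to a workable argument, but it requires producing a contraction of the cone of $p$ to which the perturbation machinery of part (1) can be applied, and you do not construct one. Since the paper itself delegates the entire proof to \cite{Crainic2004}, none of this affects the paper, but these are the two places where your proposal would need genuine additional work before it constitutes a proof.
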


\begin{proof}
	See \cite{Crainic2004}.
\end{proof}
\subsection{Hodge decomposition}
To construct a special deformation retract between $\Fun{V}$ and $\Fun{\Hlgy}$, we start with a decomposition of the vector space $V$ compatible with the odd symplectic structure.
\begin{lemma} \label{LEMSDR}
	Let $(V,Q,\omega)$ be a dg symplectic vector space.
	Then, there is a decomposition $V \cong H\oplus \Im Q \oplus C$ and maps
	$k, i, p$ such that
	$$\begin{tikzcd}
	( \Hlgy \oplus \Im Q \oplus C \arrow[start anchor={[xshift=1ex]north}, end anchor={[xshift=-3ex]north east}, out=90, in=90, distance=1em]{}{\htpy} \arrow[start anchor={[xshift=-3ex]south east}, end anchor={[xshift=1ex]south}, out=-90, in=-90, distance=1em]{}[below]{Q} ) \arrow[shift left=.9ex]{r}{p} & (\Hlgy,0) \arrow[shift left=.9ex]{l}{i}
	\end{tikzcd}\vspace*{-5mm}$$
	is an SDR.
	
	Here, the space $H$ is isomorphic to the cohomology of $V$ and is symplectic. The subspaces $C$ and $\Im Q$ are both Lagrangian subspaces of the symplectic subspace $\Im Q \oplus C$, isomorphic via $Q$ and $k = Q^{-1}|_{\Im Q}$. The maps $p$ and $i$ are projections and inclusions of the subspace $H$.
\end{lemma}
\begin{proof}
	Such decomposition is called \emph{harmonious Hodge decomposition} by Chuang and Lazarev, who prove its existence
	in \cite[Proposition~2.5, Theorem~2.7]{ChuangLazarevDecomposition}.
\end{proof}
Let us choose bases of these subspaces: $\{a_i\}$ for $\Hlgy$, $\{b_j\}$ for $\Im Q$ and $\{c_k\}$ for $C$. The differential
$Q$ thus takes $c$ to $b$ and is equal to zero on $a$ or $b$. 
The dual vector space $\dual{V}$ is also decomposed into $\dual{\Hlgy}\oplus\dual{(\Im Q)}\oplus\dual{C}$, which have bases $\alpha^i$, $\beta^i$ and $\gamma^i$. The dual of the differential $Q$ then takes $\beta$ to $\gamma$. 

In this basis, we also have a decomposition of the BV algebra structure on $\Fun{V}$.
\begin{lemma}
	Given a decomposition of $V$ as in Lemma~\ref{LEMSDR}, the BV Laplacian on $\Fun{V}$ decomposes as
	 $\Delta = \Delta' + \Delta''$, where 
$$\Delta' = \frac 12\sum_{i, j} (-1)^{\hdeg{\alpha^i}}(\omega')^{ij} \frac{\partial^2_L }{\partial \alpha^i \partial \alpha^j}, \quad\quad \Delta'' = \sum_{i, j} (-1)^{\hdeg{\gamma^i}}(\omega'')^{ij} \frac{\partial^2_L }{\partial \gamma^i \partial \beta^j}\,,$$
where $\omega'_{ij}(\omega')^{jk} = \delta^k_i$ and similarly for $\omega''$. 

The bracket also decomposes as  $\{,\} = \{,\} ' + \{,\} ''$, with analogous formulas.
\end{lemma}
\begin{proof}
	In the basis $(\{a_i\}, \{b_j\}, \{c_k\})$, the symplectic form $\omega$ decomposes to $\omega'_{ij} = \omega(a_i, a_j)$ and $\omega''_{ij} = \omega(b_i, c_j)$. The matrices for $\omega$ and its inverse then look like
	$$\omega = \left(
	\begin{matrix}
	\omega' & 0 & 0 \\
	0 & 0 & \omega'' \\
	0 & -\omega'' & 0
	\end{matrix}
	\right)\,,
	\hspace{3cm}
	\omega^{-1} = \left(
	\begin{matrix}
	(\omega')^{-1} & 0 & 0 \\
	0 & 0 & -(\omega'')^{-1} \\
	0 & (\omega'')^{-1} & 0
	\end{matrix}
	\right)\,.
	$$
\end{proof}
\subsection{General setting}
Now we would like to extend this SDR on $V$ to an SDR between $\Funpos{V}$ and $\Funpos{\Hlgy}$. There are now \emph{two} closely related differentials on $\Funpos{V}$: the one induced by $Q$, and the bracket $\{\Sfree, - \}$. In the end, we want to use a dg vector space $(\Funpos{V}, \{\Sfree, - \})$, so we need to show that $\{\Sfree , - \}$ is compatible with the choice of decomposition $V=\Hlgy\oplus \Im Q\oplus C$.
\begin{lemma}\label{LEMQSfree}
	The differential $\{\Sfree, - \}$, restricted to a map $\dual{V} \to \dual{V}$, is equal to 
	\begin{equation*}
	\{\Sfree, \phi^i\} = - \phi^i \circ Q\,,
	\end{equation*}
	therefore it is an isomorphism $\dual{(\Im Q)} \to \dual{C}$ and restricts to zero on $\dual{C}$ and $\dual{H}$. 
\end{lemma}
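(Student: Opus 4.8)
The plan is to compute the bracket $\{\Sfree,\phi^k\}$ directly in coordinates and then read off its geometric meaning from the Hodge decomposition. Since $\Sfree=\tfrac12\bar s_2^0$ is quadratic in the $\phi^i$, its bracket with a linear functional is again linear, so $\{\Sfree,-\}$ indeed restricts to a map $\dual V\to\dual V$; this is what makes the statement sensible in the first place.

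First I would expand $\Sfree=\tfrac12\sum_{a,b}s_2^0(e_a,e_b)\,\phi^a\phi^b$ using the isomorphism \eqref{EQIsoSym} and compute the right derivative $\partial_R\Sfree/\partial\phi^i$. The two terms coming from differentiating the product $\phi^a\phi^b$ coincide after relabelling: using the graded symmetry $s_2^0(e_i,e_a)=(-1)^{\hdeg{e_i}\hdeg{e_a}}s_2^0(e_a,e_i)$ together with $\hdeg{e_i}=-\hdeg{\phi^i}$, the sign produced by commuting $\phi^i$ past $\phi^a$ is exactly cancelled, so the factor $\tfrac12$ disappears and $\partial_R\Sfree/\partial\phi^i=\sum_a s_2^0(e_a,e_i)\phi^a$. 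Feeding this into the coordinate formula for the bracket, the factor $\partial_L\phi^k/\partial\phi^j=\delta^k_j$ collapses one sum and leaves
$$\{\Sfree,\phi^k\}=\sum_{i,a}s_2^0(e_a,e_i)\,\phi^a\,\omega^{ik}\,.$$

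Next I would insert the relation \eqref{EQsymfromlambda} with $n=1$, namely $s_2^0(e_a,e_i)=(-1)^{\hdeg{e_a}}\omega(e_a,Qe_i)$ with $Q=\lambda_1^0$. Writing $Qe_i=\sum_m e_m Q^m_i$, so that $\omega(e_a,Qe_i)=\sum_m\omega_{am}Q^m_i$, the task reduces to showing $\sum_{i,m}(-1)^{\hdeg{e_a}}\omega_{am}Q^m_i\,\omega^{ik}=-Q^k_a$. Here I would apply the dg-symplectic compatibility \eqref{EQsymplectic}, in the form $\omega(e_a,Qe_i)=-(-1)^{\hdeg{e_a}}\omega(Qe_a,e_i)$, to move $Q$ into the first slot; the sign $(-1)^{\hdeg{e_a}}$ then squares to $1$, and contracting $\omega_{mi}$ against its inverse $\omega^{ik}$ gives $\delta^k_m$. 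This yields $\{\Sfree,\phi^k\}=-\sum_a Q^k_a\phi^a=-\phi^k\circ Q$, which is the first claim. The main obstacle is precisely this sign bookkeeping: keeping the right-versus-left derivative conventions straight, verifying the cancellation of the $\tfrac12$ via graded symmetry, and tracking the sign in the compatibility \eqref{EQsymplectic} so that the final answer is exactly $-\phi^k\circ Q$ and not its negative.

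Finally, the structural consequences follow immediately from how $Q$ acts on $V=\Hlgy\oplus B\oplus C$, recalled right after Lemma~\ref{LEMHodgeDecomposition}: $Q$ vanishes on $\Hlgy$ and $B$ and restricts to an isomorphism $Q|_C\colon C\xrightarrow{\sim}B$, while $\Im Q=B$. Since $-\phi^i\circ Q$ evaluates $\phi^i$ only on vectors lying in $B$, and the dual basis elements $\alpha^i\in\dual{\Hlgy}$ and $\gamma^i\in\dual C$ annihilate $B$, the map $\{\Sfree,-\}$ vanishes on $\dual{\Hlgy}$ and on $\dual C$; whereas on $\dual B$ it is (up to sign) the transpose of the isomorphism $Q|_C$, hence an isomorphism $\dual B\to\dual C$, matching the statement that $\dual Q$ takes $\beta$ to $\gamma$.
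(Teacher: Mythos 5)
Your proposal is correct and follows essentially the same route as the paper: both are direct coordinate computations expanding $\Sfree=\tfrac12\bar s_2^0$, using \eqref{EQIsoDeriv} (which you rederive by hand, including the cancellation of the $\tfrac12$), relating $s_2^0$ to $\omega(-,Q-)$ via \eqref{EQsymfromlambda}, and contracting $\omega$ against its inverse; the only cosmetic difference is that you handle the sign by invoking the compatibility \eqref{EQsymplectic} to move $Q$ between slots, where the paper instead uses the graded symmetry of $s_2^0$ together with the observation that $\hdeg{e_k}=-\hdeg{e_a}$. The concluding reading of $-\phi^i\circ Q$ against the Hodge decomposition also matches the paper's intent.
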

\begin{proof}
	Let us evaluate
	$$
	\{\Sfree, \phi^i\}(e_k) = \left( \frac{1}{2} \dr{{s}^0_2}{\phi^a} \omega^{ab} \dl{\phi^i}{\phi^b} \right) (e_k) = (-1)^{\hdeg{e_k} \hdeg{e_a} + \hdeg{e_a}} \omega(e_a, Q (e_k))\omega^{ai}\,.
	$$
	Since $s_2^0$ is of degree 0, we have $\hdeg{e_k} = - \hdeg{e_a}$ and the sign disappears. 
$$
	\{\Sfree, \phi^i\}(e_k) =  \omega(e_a, Q (e_k))\omega^{ai}
	= \omega(e_a, e_m)\omega^{ai} \phi^m(Q (e_k)) = - \phi^i (Q (e_k))
	\,.
$$
Note that $s_2^0 = - \omega''_{ki}Q^k_j \gamma^j\gamma^i$, where $\omega''_{ki} = \omega(b_k, c_i)$
\end{proof}
Using this formula, we can define a homotopy $\Htpy$ on $\Funpos{V}$ as the inverse to $\{\Sfree, - \}$ on $\dual{V}$, extended by a (normalized) Leibniz rule. 
\begin{lemma} \label{LEMSDRReady}
	Given a decomposition as in Lemma~\ref{LEMSDR}, there is a deformation retract 
	$$\begin{tikzcd}
	(\Funpos{V}, \{\Sfree, - \}) \arrow[shift left=.9ex]{r}{P} \arrow[out=-160, in=160, loop, distance=3em, shift left=2.1em]{}{\Htpy} & (\Funpos{\Hlgy},0) \arrow[shift left=.9ex]{l}{I}
	\end{tikzcd}$$
	such that in a basis where $Q(c_i) = Q_i^k b_k$,
	\begin{align}
	\{\Sfree, - \} &= -  \gamma^i Q^k_i \dl{}{\beta^k}, \nonumber \\
	\Htpy(x) &= \frac{1}{\#_{\beta+ \gamma}}\beta^k (Q^{-1})^{i}_k  \dl{x}{\gamma^i}, \quad x\in\Funpos{V}, \label{EQDefDHIP} \\
	I &= \sum_{n\geq 0} (\dual{p})^{\ot n}, \quad P = \sum_{n\geq 0} (\dual{i})^{\ot n}, \nonumber
	\end{align}
	Here, $\alpha$, $\beta$ and $\gamma$ are bases of $\dual{\Hlgy}$, $\dual{(\Im Q)}$ and $\dual{C}$ and the symbol $\#_{\beta+ \gamma}$ denotes, for a monomial $x$, the number of occurrences of variables $\beta^i$ and $\gamma^i$ in $x$. When the number is zero, the operator $\Htpy$ is defined to be zero. The projection $P$ and inclusion $I$ are identities on constant (polynomial weight zero) terms and $\dual{i}$, $\dual{p}$ are duals of $i$, $p$. Explicitly,
	\begin{gather*}
	\dual{i}(\alpha^i) = \alpha^i, \quad \dual{i}(\beta^j)=0=\dual{i}(\gamma^k), \\
	\dual{p}(\alpha^i) = \alpha^i.
	\end{gather*}
\end{lemma}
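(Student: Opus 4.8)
The plan is to verify directly the axioms of a deformation retract, i.e.\ the standard situation of Definition~\ref{DEFSS} together with $PI=\id$, reading everything off the coordinate formulas. First I would record the differential: since the bracket $\{\Sfree,-\}$ is a derivation of degree $1$ (Definition~\ref{def:BValgebra}) and Lemma~\ref{LEMQSfree} gives $\{\Sfree,\alpha^i\}=\{\Sfree,\gamma^i\}=0$ and $\{\Sfree,\beta^k\}=-Q^k_i\gamma^i$ on the generators, it is the $\hbar$-linear derivation $\{\Sfree,-\}=-\gamma^i Q^k_i\,\dl{}{\beta^k}$, which is the first displayed formula. I would then observe that all of $\{\Sfree,-\}$, $\Htpy$, $I$, $P$ and $IP$ preserve both the polynomial degree and the genus, hence the weight, so each acts componentwise on the finite pieces of $\Funpos{V}$ and $\Funpos{\Hlgy}$ and there are no convergence issues at this stage. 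Conceptually, the whole diagram is the image, under the symmetric-algebra-with-$\hbar$ functor, of the transpose of the SDR of Lemma~\ref{LEMCertainSDR}, with $\dual{i}$, $\dual{p}$ as in the statement and $\Htpy$ the symmetrized extension of $\dual{\htpy}$ — which is exactly why the normalization $1/\#_{\beta+\gamma}$ appears.

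Next I would dispatch the easy identities. The square $\{\Sfree,-\}^2=0$ follows from $\{\Sfree,\Sfree\}=0$ (the genus-$0$ quadratic part of the master equation, Section~\ref{SSECExistence}) via the Jacobi identity~\eqref{eq:Jacobi}, or directly because $\{\Sfree,-\}$ turns every $\beta$ into a $\gamma$ and then annihilates it. For the chain-map conditions, $P\{\Sfree,-\}=0$ because every monomial in $\{\Sfree,-\}x$ carries a $\gamma$-factor, which $\dual{i}$ kills; and $\{\Sfree,-\}I=0$ because $I$ produces polynomials in the $\alpha$'s only, on which $\dl{}{\beta^k}$ vanishes. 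The retract identity $PI=\id$ is the dual $\dual{i}\,\dual{p}=\dual{(pi)}=\id$ on each symmetric power. The degrees $\dg{\{\Sfree,-\}}=1$, $\dg{P}=\dg{I}=0$, $\dg{\Htpy}=-1$ are immediate. The only place where the transpose signs enter is in matching these generator-level identities to their counterparts $pQ=0$, $Qi=0$, $pi=\id$ in Lemma~\ref{LEMCertainSDR}, and this is pure bookkeeping.

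The crux is the homotopy identity $IP-\id=\{\Sfree,-\}\,\Htpy+\Htpy\,\{\Sfree,-\}$, where the factor $1/\#_{\beta+\gamma}$ makes $\Htpy$ \emph{not} a derivation, so it cannot be checked on generators alone. I would circumvent this by writing $\Htpy=\tfrac1N\widetilde{\Htpy}$, with $N\equiv\#_{\beta+\gamma}$ the operator counting $\beta$- and $\gamma$-variables and $\widetilde{\Htpy}\equiv\beta^k(Q^{-1})^i_k\,\dl{}{\gamma^i}$ a genuine odd derivation. Since both $\{\Sfree,-\}$ and $\widetilde{\Htpy}$ merely swap one $\beta$ for one $\gamma$ (and conversely), they preserve $N$, so on each fixed-$N$ summand the scalar $1/N$ is central and
\[
\{\Sfree,-\}\,\Htpy+\Htpy\,\{\Sfree,-\}=\tfrac1N\bigl(\{\Sfree,-\}\,\widetilde{\Htpy}+\widetilde{\Htpy}\,\{\Sfree,-\}\bigr).
\]
The graded commutator of the two odd derivations $\{\Sfree,-\}$ and $\widetilde{\Htpy}$ is again a derivation, so I only need its values on generators: using $QQ^{-1}=\id$ one finds it sends $\beta^m\mapsto-\beta^m$, $\gamma^m\mapsto-\gamma^m$ and kills every $\alpha^m$, hence equals $-N$. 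Dividing by $N$ gives $\{\Sfree,-\}\,\Htpy+\Htpy\,\{\Sfree,-\}=-\id$ on the part carrying at least one $\beta$ or $\gamma$; there $P$ annihilates the monomial, so $IP=0$ and $IP-\id=-\id$ matches. On the complementary part, polynomials in the $\alpha$'s alone, $\Htpy=\{\Sfree,-\}=0$ while $IP=\id$, so both sides vanish. This establishes the deformation retract.

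Finally, I would remark that the same formulas make it a \emph{special} deformation retract: $P\Htpy=0$ since $\Htpy x$ always carries a $\beta$-factor, $\Htpy I=0$ since $I$ produces no $\gamma$'s, and $\Htpy^2=\tfrac1{N^2}\widetilde{\Htpy}{}^2=0$ because the derivation $\widetilde{\Htpy}$ creates $\beta$'s which it then annihilates. The main obstacle throughout is exactly the $1/\#_{\beta+\gamma}$ normalization: it is forced because the derivation anticommutator evaluates to $-N$ rather than $-\id$, and the clean way to handle it is the factorization $\Htpy=\tfrac1N\widetilde{\Htpy}$ together with the centrality of $N$ on each homogeneous piece.
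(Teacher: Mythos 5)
Your proposal is correct and follows essentially the same route as the paper: the paper likewise isolates $IP-\id=[\{\Sfree,-\},\Htpy]$ as the only nontrivial identity, strips off the $1/\#_{\beta+\gamma}$ normalization to work with the derivation $\Htpy_0=\#_{\beta+\gamma}\Htpy$, computes the graded commutator to be $-\gamma^i\dl{}{\gamma^i}-\beta^k\dl{}{\beta^k}=-\#_{\beta+\gamma}$, and divides using the fact that $\{\Sfree,-\}$ commutes with the counting operator. The only cosmetic differences are that the paper simplifies by choosing a basis with $-Q^k_i=\delta^k_i$ and tracks the Koszul signs explicitly, whereas you keep $Q$ general and invoke the fact that an anticommutator of odd derivations is determined on generators.
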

\begin{proof}
	The only non-trivial identity to check $IP - 1 = [\{\Sfree, - \}, \Htpy]$. For simplicity, let us choose a basis where $-Q^k_i$ is the identity matrix and denote by $\Htpy_0  = \#_{\beta+\gamma} \Htpy=  - \beta^k  \dl{}{\gamma^k}$ the unnormalized homotopy operator. Let us compute
\begin{align*}
	[\{\Sfree, - \}, {\Htpy}_0] &= \left[ \gamma^i \dl{}{\beta^i},  - \beta^k   \dl{}{\gamma^k}\right] \\ &= - \gamma^i \dl{}{\gamma^i} - (-1)^{\hdeg{\beta^i} + \hdeg{\gamma^i}\hdeg{\beta^k} + 1 + \hdeg{\gamma^i}\hdeg{\gamma^k}} \beta^i \delta^k_i  \dl{}{\beta^k} = - \gamma^i \dl{}{\gamma^i} - \beta^k  \dl{}{\beta^k}\,.
\end{align*} 
	This operator, applied on a monomial, will multiply it by minus the number of variables $\gamma$ and $\beta$. Since $\{\Sfree, -\}$ commutes with $\#_{\beta + \gamma}$, the commutator $[\{\Sfree, - \}, \Htpy]$ is then minus the identity on monomials with $\#_{\beta+\gamma}\neq 0$ and zero otherwise. This is, however, exactly $IP-1$.  
\end{proof}
\begin{remark}
	Given an SDR as in~\eqref{EQSSInPertLemma}, the process of inducing an SDR on tensor powers is often called the \emph{tensor trick}, and goes back at least to Eilenberg and Mac Lane \cite[Section 12]{EM}. Their formula for the homotopy is, after symmetrization,
	$$
		\Htpy \equiv \sum_{n\geq 1} \sum_{i=1}^n \frac{1}{n!}\sum_{\sigma\in\Sigma_n} \sigma\cdot\left(\id^{\otimes (n-i)}\ot \dual{\htpy}\ot(ip)^{\ot (i-1)}\right)\,,
	$$
	and since $ip$ is a projector, this gives $\dual{\htpy}$ extended as a derivative, up to a multiplicative factor. One can then check that this factor is equal to $1/\#_{\beta+\gamma}$. To get a homotopy for the differential $\{\Sfree, - \}$, one needs to introduce a sign as in Lemma~\ref{LEMQSfree}. 
	
	Let us also remark that an analogous retract can be defined on $\Fun{V}$ and $\Fun{\Hlgy}$ by the same formulas.
\end{remark}
\section{Transfer}
\label{SECTRANSFER}
\subsection{The two perturbations}
\label{SSEC:twoperturbations}
Recall that we decomposed an action $S\in\Fun{V}$ as
$$S=\Sfree+\Sint,$$
where $\Sfree$ is concentrated in genus $0$  and quadratic in variables of $\dual{V}$, while $\Sint$ is at least cubic in $\dual{V}$ in genus $0$, linear in genus $1$, and there are no restrictions in higher genera.
Since $S$ satisfies the quantum master equation, we have a differential
\begin{equation*}
T_S =  \{ \Sfree, - \} + \hbar \Delta + \{\Sint, -\}\,.
\end{equation*}
Consider the SDR of Lemma~\ref{LEMSDRReady}
\begin{gather} \label{EQHTPOnceAgain}
\begin{tikzcd}[ampersand replacement=\&]
(\Funpos{V},\{\Sfree, -\}) \arrow[shift left=.9ex]{r}{P} \arrow[out=-160, in=160, loop, distance=3em, shift left=2.1em]{}{\Htpy} \& (\Funpos{\Hlgy},0) \arrow[shift left=.9ex]{l}{I}
\end{tikzcd}
\end{gather}
There are two perturbations of $\{\Sfree, -\}$ we will consider:
\begin{itemize}
	\item A perturbation $\delta_1 \equiv \hbar \Delta$. The perturbed differential squares to zero since $S_0$ solves the quantum master equation -- see Section~\ref{SSECExistence} for details. This perturbation will correspond to the unnormalized path integral.
	\item The perturbation $\delta_2 \equiv \hbar \Delta + \{\Sint, -\}$. This perturbation corresponds to the normalized path integral, with weight $S$. 
\end{itemize}
\subsubsection{Perturbation by $\hbar \Delta$}
Consider the SDR~\eqref{EQHTPOnceAgain} and take 
$$\delta_1\equiv\hbar\Delta,$$
as a perturbation. Let's denote the corresponding perturbed maps with subscript 1, e.g.
\begin{equation*}
E_1 = P(1-\delta_1\Htpy)^{-1}\delta_1 I = P(1-\hbar\Delta \Htpy)^{-1}\hbar\Delta I = \hbar P\Delta I = \hbar \Delta',
\end{equation*}
since $\Htpy\Delta I=0$, which follows easily from Equation~\eqref{EQDefDHIP} for $\Htpy$.
The other maps are
\begin{align*}
\Htpy_1 &= \Htpy + \Htpy \hbar\Delta \Htpy + \Htpy \hbar \Delta \Htpy \hbar \Delta \Htpy + \dots = \Htpy + \Htpy \hbar\Delta'' \Htpy + \Htpy \hbar \Delta'' \Htpy \hbar \Delta'' \Htpy + \dots\,, \\
I_1 &= I + \Htpy \hbar\Delta I + \Htpy \hbar\Delta \Htpy \hbar\Delta I + \dots = I\,,\\
P_1 &= P + P \hbar\Delta \Htpy + P \hbar\Delta \Htpy \hbar\Delta \Htpy  + \dots\,.
\end{align*}
where the simplification in $\Htpy_1$ is because $\Delta'$ anticommutes with $\Htpy$ and $\Htpy^2 = 0$. All these maps are of weight 0 and the series converge since $\Delta$ always decreases the polynomial degree by 2. 

\begin{definition}\label{DEFEAPI}
	The \textbf{effective action} $W\in\Funpos{\Hlgy}$ is defined by
	$$e^{W/\hbar} \equiv P_1(e^{\Sint/\hbar}) = P(1-\hbar\Delta \Htpy)^{-1}e^{\Sint/\hbar}.$$
	The \textbf{path integral} is a map $Z:\Funpos{V}\to\Funpos{H}$ defined by
	$$Z(f) \equiv (P_1(e^{\Sint/\hbar}))^{-1}P_1(e^{\Sint/\hbar}f) = e^{-W/\hbar}P(1-\hbar\Delta \Htpy)^{-1}(e^{\Sint/\hbar}f).$$
\end{definition}
\begin{remark}
	Here, we have the issue of the constant $1$ in the expansion of $\exp(X) = 1 + X + \dots$. In the definition of $W$, 1 is annihilated by everything but the first term in $P_1$,
	i.e. $P_1(e^{\Sint/\hbar})$ starts with $1$, and we can take the logarithm. 
	
	For the definition of the path integral $Z(f)$, if we take $f\in \Funpos{V}$, then also $fe^{\Sint/\hbar}\in \Funpos{V}$ and it also makes sense to multiply by the inverse of $P_1(e^{\Sint/\hbar})$,
	again because $P_1(e^{\Sint/\hbar})$ starts with $1$. Thus, $Z(f)$ is again in $\Funpos{V}$
\end{remark}
\begin{theorem} \label{THMCertainComputationII}
	The effective action $W$ is an element of $\Fun{H}$, i.e. it contains only
	nonnegative powers of $\hbar$. Moreover, 
	$W$ satisfies the master equation on $\Fun{H}$:
	$$\hbar \Delta ' W+\frac{1}{2}\{W,W\}'=0\,.$$
\end{theorem}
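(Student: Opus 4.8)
The plan is to establish the two assertions by independent arguments. The master equation for $W$ will follow purely algebraically from the fact that the perturbed projection $P_1$ is a chain map --- an output of the perturbation lemma --- together with Lemma \ref{LEMex1}. The statement that $W$ contains only nonnegative powers of $\hbar$ is of a different, combinatorial nature: it reflects the connectedness of the logarithm, and this is where I expect the real work to lie.

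For the master equation, first I would record that applying Theorem \ref{THMPL} to the SDR of Lemma \ref{LEMSDRReady} with the perturbation $\delta_1=\hbar\Delta$ produces the SS \eqref{EQSSPerturbedInPertLemma}, whose intertwining relation $p'Q'=e'p'$ reads
$$ \hbar\Delta'\,P_1=P_1\bigl(\{\Sfree,-\}+\hbar\Delta\bigr)=P_1\,T_{\Sfree}. $$
Since the quantum master equation for $S=\Sfree+\Sint$ is, by \eqref{EQMasterEqSfree}, equivalent to $T_{\Sfree}(e^{\Sint/\hbar})=0$, and since $P_1$ fixes the constant $1$ (so the relation extends from $\Funpos{V}$ to $1+\Funpos{V}$), I obtain
$$ \hbar\Delta'\,e^{W/\hbar}=\hbar\Delta'\,P_1(e^{\Sint/\hbar})=P_1\,T_{\Sfree}(e^{\Sint/\hbar})=0. $$
Lemma \ref{LEMex1}, applied with $\Delta'$ and $W$ in place of $\Delta$ and $S$ (its hypotheses hold because $W/\hbar\in\Funpos{\Hlgy}$, exactly as in Section \ref{SSECExistence}), then gives
$$ 0=\hbar\Delta'\,e^{W/\hbar}=\tfrac{1}{\hbar}\,e^{W/\hbar}\bigl(\hbar\Delta' W+\tfrac12\{W,W\}'\bigr). $$
As $e^{W/\hbar}=1+(\text{positive weight})$ is invertible, the parenthesis vanishes, which is the claimed master equation.

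For the nonnegative powers of $\hbar$, the weight grading alone is not enough: it only yields $W/\hbar\in\Funpos{\Hlgy}$, and $\Funpos{\Hlgy}$ does not bound the genus from below --- indeed $e^{W/\hbar}=P_1(e^{\Sint/\hbar})$ already contains arbitrarily negative powers of $\hbar$ (so does $P\,e^{\Sint/\hbar}$). The essential point is that $W=\hbar\log P_1(e^{\Sint/\hbar})$ selects the \emph{connected} contributions. Expanding $P_1=P(1-\hbar\Delta\Htpy)^{-1}$ and $e^{\Sint/\hbar}=\sum_m\frac{1}{m!}(\Sint/\hbar)^m$ exhibits $P_1(e^{\Sint/\hbar})$ as a sum over graphs whose vertices are the homogeneous pieces of $\Sint$, of genus $g_v\ge 0$, and whose edges are the propagators $\hbar\Delta\Htpy$; passing to $\hbar\log$ retains only the connected graphs. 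A connected graph with $m$ vertices and $k$ edges has $L=k-m+1\ge 0$ loops, and since each edge carries a factor $\hbar^{+1}$ while each vertex carries $\hbar^{g_v-1}$ from $\Sint/\hbar$, it contributes $\hbar^{\sum_v g_v-m+k}$ to $W/\hbar$, hence $\hbar^{\sum_v g_v+L}$ to $W$. As $\sum_v g_v+L\ge 0$, every term of $W$ has nonnegative genus, so $W\in\Fun{\Hlgy}$.

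The main obstacle is precisely this last paragraph: one must make the graph expansion of $P_1(e^{\Sint/\hbar})$ rigorous in the graded BV setting, justify the exponential (linked-cluster) formula identifying $\hbar\log$ with the sum over connected graphs, and verify the Euler-characteristic bookkeeping --- equivalently, that each insertion of $\hbar\Delta\Htpy$ raises the genus by exactly one while contracting a pair of legs. The identification of $P_1(e^{\Sint/\hbar})$ with the honest BV path integral, developed in Section \ref{SECRemarks}, is what makes this combinatorics transparent; an alternative is an induction on weight, but the connectedness of the logarithm remains the indispensable ingredient. By contrast, once the chain-map property of $P_1$ is granted the master equation is immediate, so the algebraic half of the theorem is the easy one.
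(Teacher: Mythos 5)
Your proposal is correct and follows essentially the same route as the paper: the master equation is obtained from the chain-map identity $\hbar\Delta' P_1 = P_1(\{\Sfree,-\}+\hbar\Delta)$ evaluated on $e^{\Sint/\hbar}$, and the nonnegativity of powers of $\hbar$ comes from the Feynman-graph expansion of $P_1$ together with the connectedness of the logarithm. The only difference is presentational: where you carry out the Euler-characteristic bookkeeping ($\hbar^{\sum_v g_v + L}$ for a connected graph with $L\ge 0$ loops) by hand, the paper first establishes the closed form $P_1 = P\exp(\tfrac12\hbar\partial_P)$ (tracking the $1/2^n n!$ normalization and the reduction to $\Delta''$) and then delegates exactly that loop-counting argument to Lemma 3.4.1 of Costello.
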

	
\begin{proof}
	The first part is proven by expressing $P_1(F)$, for $F\in \Fun{V}$, as a Feynman expansion. We begin by noting that every $\Htpy$ in the expansion of $P_1$ adds one variable $\beta$. Since $\Delta$ can remove at most one $\beta$ and the leftmost $P$ is zero on anything with $\beta$, the only nonzero terms of $P_1(F)$ are those where
	\begin{itemize}
		\item $F$ itself has no variables $\beta$ and 
		\item all $\Delta$ remove one $\beta$, i.e. only terms with $\Delta''$.
	\end{itemize}
	We can thus write
	\begin{equation*}
	P_1 = P + P \hbar\Delta'' \Htpy + P \hbar\Delta'' \Htpy \hbar\Delta'' \Htpy  + \dots
	\end{equation*}
	Now, let this act on a monomial with zero variables $\beta$ and $2n$ variables $\gamma$. Each term $\Delta'' \Htpy$ removes two $\gamma$s, so the total numerical factor coming from the normalization of $\Htpy$ is equal to
	$$\frac{1}{2n}\frac{1}{2n-2} \dots \frac{1}{2} = \frac{1}{2^n n!}\,.$$
	Since $\Delta''$ must always remove $\beta$ in order to have nonzero contribution, in $P_1$ we get a repeated application of quadratic differential operator
	$$ \partial_P \equiv \left[  (-1)^{\hdeg{\gamma^i}}(\omega'')^{ij} \frac{\partial^2_L }{\partial \gamma^i \partial \beta^j}, \beta^k (Q^{-1})^{l}_k  \dl{}{\gamma^l} \right] 
	= (-1)^\hdeg{\gamma^i} (\omega'')^{ij} (Q^{-1})^l_j \frac{\partial^2_L }{\partial \gamma^i \partial \gamma^l}\,. $$
	Together with the normalization, we see that we can write $P_1$ as
	$$P_1 = P \exp{(\frac{1}{2}\hbar \partial_P)}\,,$$
	which is by standard arguments a sum over graphs, ending with legs with variables $\alpha$ due to the projection $P$ (see e.g. Lemma 3.4.1 of \cite[Chapter~2]{Costello2010}).
	
	The effective action
	$$W = \hbar \log P [ \exp{(\frac 12 \hbar \partial_P)} \exp{(\Sint/\hbar)}] $$
	thus contains, by Lemma~3.4.1 of loc. cit, only nonnegative powers of $\hbar$.
	\medskip
	
	To show that $W$ is a solution to the quantum master equation, we use that the perturbed map $P_1$ is again a chain map
	$$P_1(\{\Sfree, -\}+\hbar\Delta)=E_1P_1 = \hbar\Delta' P_1\,,$$
	and evaluate this on $e^{\Sint/\hbar}$. Using Equation~\eqref{EQMasterEqSfree}, we get that the left hand side is zero, while the right hand side is equal to
	$\hbar \Delta' e^{W/\hbar}$. 
\end{proof}
\subsection{Perturbation by $\hbar\Delta + \{\Sint, - \}$}
We defined the map $Z$, the normalized path integral, as a map $\Funpos{V} \to \Funpos{H}$. We want to show that it's also a map $\Fun{V} \to \Fun{H}$ and relate it to the perturbation lemma. To do this, we consider the other perturbation from Section~\ref{SSEC:twoperturbations}
$$\delta_2 = \hbar \Delta + \{\Sint, -\} \,.$$
The perturbation lemma then gives the following maps
\begin{align*}
\Htpy_2 &= \Htpy + \Htpy \delta_2 \Htpy + \Htpy \delta_2 \Htpy \delta_2 \Htpy + \dots \,, \\
I_2 &= I + \Htpy \delta_2 I + \Htpy \delta_2 \Htpy \delta_2 I + \dots \,,\\
P_2 &= P + P \delta_2 \Htpy + P \delta_2 \Htpy \delta_2\Htpy  + \dots\,, \\
E_2 &= P \delta_2 I + P \delta_2 \Htpy \delta_2 I  + P \delta_2 \Htpy \delta_2 \Htpy \delta_2 I + \dots \,.
\end{align*}
Here, $\delta_2 = \hbar\Delta + \{\Sint, - \}$ never decreases the weight. To see that the series converge,
note first that any of the above, applied on monomial $x$, will give a finite contribution to any fixed weight. Because for a general element
$F \in \Funpos{V}$, there are only finitely many elements of weight smaller or equal to some number, the perturbed operators are well defined.

A similar argument works when we take $F\in \Fun{V}$.

\begin{theorem} \label{LEMEqualityOfProjections}
	The map $Z$ from Definition~\ref{DEFEAPI} is equal to $P_2$, i.e. 
	$$Z(f)\equiv e^{-W/\hbar} P_1(f e^{S/\hbar}) =P_2(f)\,.$$
	Thus, considering the perturbation $\delta_2$ of a deformation retract {taken on $\Fun{V}$ and $\Fun{H}$ instead of $\Funpos{V}$ and $\Funpos{\Hlgy}$}, we get that 
	$Z$ is a map $\Fun{V}\to\Fun{H}$.
\end{theorem}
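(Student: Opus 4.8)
The plan is to show that $Z$ satisfies the equation that uniquely pins down the perturbed projection $P_2$. Because $\delta_2=\hbar\Delta+\{\Sint,-\}$ never lowers the weight, while the purely $\hbar\Delta''$-part of $\delta_2\Htpy$ strictly lowers the polynomial degree, only finitely many terms of the geometric series contribute to each fixed weight and polynomial degree (both on $\Funpos{V}$ and on $\Fun{V}$); hence $1-\delta_2\Htpy$ is invertible and $P_2=P(1-\delta_2\Htpy)^{-1}$ is the \emph{unique} operator with $P_2(1-\delta_2\Htpy)=P$. So it suffices to prove $Z(1-\delta_2\Htpy)=P$. I rewrite $Z=m_{-W}\circ P_1\circ m_{\Sint}$, where $m_{\Sint}$ and $m_{-W}$ denote multiplication by $e^{\Sint/\hbar}$ on $\Funpos{V}$ and by $e^{-W/\hbar}$ on $\Funpos{\Hlgy}$ (I read the $e^{S/\hbar}$ of the statement as $e^{\Sint/\hbar}$, the $\Sfree$-weight being already built into $P_1$), which reduces the whole problem to juggling these three operators.

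First I would record the intertwining relations. The step-by-step twisting after Lemma \ref{LEMex2}, applied to the quantum master equation for $S$, gives $T_{\Sfree}\,m_{\Sint}=m_{\Sint}\,T_S$ on $\Funpos{V}$; the same lemma with zero free part, applied to the master equation for $W$ from Theorem \ref{THMCertainComputationII}, gives $m_{-W}\,\hbar\Delta'=T_W\,m_{-W}$, where $T_S=\hbar\Delta+\{S,-\}$ and $T_W=\hbar\Delta'+\{W,-\}'$. Together with the chain-map identity $P_1T_{\Sfree}=\hbar\Delta'P_1$ from Theorem \ref{THMCertainComputationII}, these at once give that $Z$ is a chain map, $Z\,T_S=T_W\,Z$.

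The heart of the argument is two facts that are \emph{not} formal consequences of the homological-algebra identities, and whose role is precisely to break an otherwise self-referential computation: the normalisation $ZI=1$ and the vanishing $Z\Htpy=0$, with $I,\Htpy$ the unperturbed inclusion and homotopy. Both come from the diagrammatic form $P_1=P\exp(\tfrac12\hbar\,\partial_P)$ of Theorem \ref{THMCertainComputationII}. For $ZI=1$: the propagator $\partial_P$ differentiates only in the $\gamma$-variables, whereas $I(\xi)$ involves only $\alpha$-variables, so $I(\xi)$ is inert under $\exp(\tfrac12\hbar\,\partial_P)$ and factors out; since $P$ is the algebra homomorphism setting $\beta=\gamma=0$ and $PI=1$, one gets $P_1(e^{\Sint/\hbar}I(\xi))=P\!\left(\exp(\tfrac12\hbar\,\partial_P)e^{\Sint/\hbar}\right)\cdot\xi=e^{W/\hbar}\xi$, i.e.\ $P_1m_{\Sint}I=m_{W}$ and $ZI=1$. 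For $Z\Htpy=0$: the $\beta$-counting in the proof of Theorem \ref{THMCertainComputationII} shows that $P_1$ annihilates every element of strictly positive $\beta$-degree, and $\Htpy(x)$ always carries at least one $\beta$, so $P_1(e^{\Sint/\hbar}\Htpy(x))=0$. I expect this $\beta$-degree bookkeeping to be the genuine obstacle: the abstract data (chain map, $ZI=1$, the special-deformation-retract relations) only reproduce the target identity as an equivalent restatement, so one must descend to the explicit Feynman expansion of $P_1$ to close the loop.

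With these facts the verification is short. Using $\delta_2=T_S-\{\Sfree,-\}$ and the retract identity $\{\Sfree,-\}\Htpy+\Htpy\{\Sfree,-\}=IP-1$ from Lemma \ref{LEMSDRReady}, I compute $Z\,\delta_2\Htpy=Z\,T_S\Htpy-Z\{\Sfree,-\}\Htpy$; the first term equals $T_W(Z\Htpy)=0$ by the chain-map property and $Z\Htpy=0$, while the second equals $Z(IP-1-\Htpy\{\Sfree,-\})=ZIP-Z=P-Z$ by $Z\Htpy=0$ and $ZI=1$. Hence $Z(1-\delta_2\Htpy)=Z-(Z-P)=P$, so $Z=P_2$. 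Finally, since $P$, $\Htpy$, $\delta_2$ and the series $(1-\delta_2\Htpy)^{-1}$ are defined by the same formulas on $\Fun{V}$ and $\Fun{\Hlgy}$ (the remark following Lemma \ref{LEMSDRReady}, together with the convergence noted above), the identity $Z=P_2$ exhibits $Z$ as a well-defined map $\Fun{V}\to\Fun{\Hlgy}$.
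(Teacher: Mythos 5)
Your proof is correct, and it reaches the identity $Z=P_2$ by a genuinely different route than the paper, although both arguments ultimately rest on the same three facts: the chain-map property $Z\,T_S=T_W\,Z$, the normalisation $ZI=\id$, and the vanishing of $Z$ on anything carrying a $\beta$ (hence $Z\Htpy=0$) --- the last two being exactly the paper's Lemmas \ref{LEMClaimTwo} and \ref{LEMClaimOne}, proved there by the same $\beta$-counting in the Feynman expansion of $P_1$ that you invoke. The paper evaluates the perturbed homotopy identity $I_2P_2-\id=\Htpy_2D_2+D_2\Htpy_2$ on $f$, applies $Z$ to both sides, and then kills the three correction terms one by one using those lemmas; this forces it to handle the perturbed maps $I_2$ and $\Htpy_2$ and to argue separately that each of them ``adds a $\beta$.'' You instead characterise $P_2$ as the unique solution of $P_2(1-\delta_2\Htpy)=P$ (legitimate, since the geometric series converges so $1-\delta_2\Htpy$ is invertible) and verify that $Z$ satisfies the same resolvent equation by a two-line computation using only the \emph{unperturbed} $I$, $\Htpy$ and the retract identity $\{\Sfree,-\}\Htpy+\Htpy\{\Sfree,-\}=IP-1$. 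This buys a cleaner bookkeeping --- no perturbed homotopy data ever appears --- at the cost of making the uniqueness/invertibility step explicit, which the paper's version does not need. Your reading of the typo $e^{S/\hbar}$ versus $e^{\Sint/\hbar}$ in the statement is the intended one, and your final remark on extending everything to $\Fun{V}$ matches the paper's.
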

To prove this theorem, will need two simple results. 
\begin{lemma} \label{LEMClaimOne}
	$Z(f)=0$ and $P_1(f) = 0$ if $f$ is a monomial with at least one $\beta$.
\end{lemma}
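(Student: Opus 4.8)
The plan is to establish the statement for $P_1$ by a $\beta$-counting argument and then to read off the claim for $Z$ by linearity. For a monomial, write $\#_\beta$ for the number of dual variables $\beta^i$ it contains, and track how $\#_\beta$ changes under the three operators appearing in the convergent expansion $P_1 = \sum_{m\ge 0} P(\hbar\Delta\Htpy)^m$. Using the splitting $\Delta=\Delta'+\Delta''$, the piece $\Delta'$ only differentiates the $\alpha$-variables, so it preserves $\#_\beta$, while $\Delta''$ differentiates once in $\beta$ and once in $\gamma$ and hence lowers $\#_\beta$ by exactly one. The homotopy $\Htpy$ of \eqref{EQDefDHIP} differentiates in $\gamma$ and multiplies by $\beta$, raising $\#_\beta$ by exactly one. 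Finally, $P=\sum_n(\dual{i})^{\ot n}$ annihilates any monomial containing a $\beta$, since $\dual{i}(\beta^j)=0$.

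With this bookkeeping, I would analyze a single term $P(\hbar\Delta\Htpy)^m$. It has exactly $m$ factors $\Htpy$, each raising $\#_\beta$ by one, and $m$ factors $\Delta$; expanding each into $\Delta'+\Delta''$, a given summand uses some number $k\le m$ of copies of $\Delta''$, which lower $\#_\beta$ by one each, the remaining factors preserving it. Hence the net change of $\#_\beta$ along any nonzero summand equals $m-k\ge 0$. If $f$ is a monomial with $\#_\beta(f)=b\ge 1$, then every monomial that survives to the leftmost $P$ has $\#_\beta=b+(m-k)\ge b\ge 1$ and is therefore annihilated by $P$. Since this holds termwise in the series defining $P_1$, we obtain $P_1(f)=0$. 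This is the same $\beta$-balance already exploited in the proof of Theorem~\ref{THMCertainComputationII}.

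To deduce the claim for $Z$, I would use $Z(f)=e^{-W/\hbar}P_1(f\,e^{\Sint/\hbar})$ together with the fact that multiplication only adjoins variables and never removes them. Because $f$ is a monomial carrying at least one $\beta$, every monomial in the expansion of the product $f\,e^{\Sint/\hbar}$ contains $f$ as a factor and hence still carries at least one $\beta$. By linearity of $P_1$ and the first part, $P_1(f\,e^{\Sint/\hbar})=0$, so $Z(f)=0$; the identical argument applies with $e^{S/\hbar}$ in place of $e^{\Sint/\hbar}$.

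The entire content of the proof is this $\beta$-balance, so I do not expect a genuine obstacle; the only points that need care are the verification that $\Delta$ removes at most one $\beta$ per application whereas $\Htpy$ always adds exactly one, and the observation that the argument is applied termwise to a series whose convergence is guaranteed by the weight grading of Lemma~\ref{LEMSDRReady}.
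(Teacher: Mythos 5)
Your proof is correct and follows essentially the same route as the paper's: the key observation in both is that $\Htpy$ adds exactly one $\beta$ while $\Delta$ removes at most one, so every nonzero monomial of $(\hbar\Delta\Htpy)^m(x)$ has at least as many $\beta$'s as $x$, and the leftmost $P$ annihilates anything carrying a $\beta$. Your version is slightly more explicit about the $\Delta=\Delta'+\Delta''$ bookkeeping and about why multiplying $f$ by $e^{\Sint/\hbar}$ preserves the presence of a $\beta$, but the argument is the same.
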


\begin{proof}
	We used this fact already in the proof of Theorem~\ref{THMCertainComputationII}:
	Observe that every nonzero monomial of $\Delta \Htpy(x)$ has at least as many $\beta$'s as $x$ for arbitrary monomial $x\in\Fun{V}$.
	Since $e^{\Sint/\hbar}f$ has at least one $\beta$, then so does $(1-\hbar\Delta \Htpy)^{-1}(e^{\Sint/\hbar}f)=\sum_{n\geq 0}(\hbar\Delta \Htpy)^n(e^{\Sint/\hbar}f)$, and hence vanishes, because $\beta$'s are killed by $P$. 
	
	The proof of $P_1(f) = 0$ is completely analogous.
\end{proof}

\begin{lemma} \label{LEMClaimTwo}
	$Z(Ig)=g$ whenever $g\in\Fun{H}$.
\end{lemma}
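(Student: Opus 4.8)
The plan is to reduce everything to the explicit Wick formula $P_1 = P\exp(\tfrac12\hbar\partial_P)$ obtained in the proof of Theorem~\ref{THMCertainComputationII}, and to exploit that $Ig$ depends only on the variables $\alpha$ dual to $\Hlgy$. Starting from the definition,
$$Z(Ig) = e^{-W/\hbar}\,P_1\!\left(e^{\Sint/\hbar}\,(Ig)\right),$$
and using that $\Sint$ has degree $0$, so that $e^{\Sint/\hbar}$ is central in the graded commutative algebra, I would rewrite the argument as $(Ig)\,e^{\Sint/\hbar}$ and then pull the factor $Ig$ out of $P_1$.

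The key step is the factorization $P_1\big((Ig)H\big)=g\,P_1(H)$ for every $H\in\Fun{V}$. To prove it, note first that $Ig = \sum_{n\ge 0}(\dual{p})^{\ot n}(g)$ is a power series in the variables $\alpha$ alone, containing no $\beta$ or $\gamma$. The operator $\partial_P$ is, by its definition in the proof of Theorem~\ref{THMCertainComputationII}, a second-order differential operator in the $\gamma$ variables only and has degree $0$; hence each application of $\partial_P$ passes through the $\gamma$-free factor $Ig$ with trivial Koszul sign, giving $\exp(\tfrac12\hbar\partial_P)\big((Ig)H\big)=(Ig)\,\exp(\tfrac12\hbar\partial_P)(H)$. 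Applying $P=\sum_{n\ge 0}(\dual{i})^{\ot n}$, which is an algebra homomorphism restricting functions to $\Hlgy$ (i.e.\ setting $\beta=\gamma=0$) and satisfying $PI = \dual{(pi)}^{\ot\bullet}=1$, I obtain
$$P_1\big((Ig)H\big)=P\big((Ig)\exp(\tfrac12\hbar\partial_P)H\big)=P(Ig)\,P\big(\exp(\tfrac12\hbar\partial_P)H\big)=g\,P_1(H).$$
Specializing to $H=e^{\Sint/\hbar}$ and recalling $P_1(e^{\Sint/\hbar})=e^{W/\hbar}$ yields $P_1\big((Ig)e^{\Sint/\hbar}\big)=g\,e^{W/\hbar}$, whence
$$Z(Ig)=e^{-W/\hbar}\,g\,e^{W/\hbar}=g,$$
since $W$ has degree $0$, so $e^{\pm W/\hbar}$ are central and cancel.

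The main obstacle is the factorization itself, and more precisely the point that the $\alpha$-dependence of $Ig$ never interacts with $P_1$. The full Laplacian $\Delta=\Delta'+\Delta''$ does contain the piece $\Delta'$ built from $\alpha$-derivatives, which would genuinely act on $Ig$; but this piece contributes nothing here because, exactly as in the proof of Theorem~\ref{THMCertainComputationII} and in Lemma~\ref{LEMClaimOne}, every $\Htpy$ in the expansion of $P_1$ adds a $\beta$ while only $\Delta''$ removes one, so any factor of $\Delta'$ leaves a leftover $\beta$ that is killed by the leftmost $P$. This is precisely what is already encoded in the reduction $P_1=P\exp(\tfrac12\hbar\partial_P)$, and adopting that form sidesteps the sign bookkeeping that a direct expansion of $P(1-\hbar\Delta \Htpy)^{-1}$ would require. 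Finally, I would remark that the same computation makes sense verbatim on $\Fun{V}$, consistent with the conclusion of Theorem~\ref{LEMEqualityOfProjections}.
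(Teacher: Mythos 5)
Your proof is correct and follows essentially the same route as the paper's: the key points in both are that only the $\Delta''$ terms of $P_1$ survive the leftmost $P$ (by counting the $\beta$'s added by $\Htpy$) and that $Ig$ contains no $\beta$ or $\gamma$, so the surviving operators commute past it and $P$ then splits off the factor $PI(g)=g$. The only difference is that you route the argument through the already-established Wick form $P_1=P\exp(\tfrac12\hbar\partial_P)$, which conveniently absorbs the bookkeeping of the $1/\#_{\beta+\gamma}$ normalization of $\Htpy$ that the paper instead handles directly by noting that $I(g)$ does not affect that count.
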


\begin{proof}
	Again, every nonzero monomial of $\Delta \Htpy(x)$ has at least as many $\beta$'s as $x$ for arbitrary monomial $x\in \Fun{V}$.
	The nonzero monomials of $P(\Delta \Htpy)^n(x)$ are only those where every $\beta$ added by $\Htpy$ is removed by some $\Delta$.
	Since the number of $\Htpy$'s and $\Delta$'s are equal, every $\Delta=\Delta'+\Delta''$ has to act only as $\Delta''$.
	Thus 
	$$P(\Delta \Htpy)^n(e^{\Sint/\hbar}I(g)) = P(\Delta'' \Htpy)^n(e^{\Sint/\hbar}I(g)) = P[((\Delta'' \Htpy)^ne^{\Sint/\hbar})I(g)],$$
	where the last holds because $I(g)$ has no variables $\beta$ or $\gamma$, so $\Delta'' \Htpy$ does not act on it and it does not affect the normalization of $\Htpy$.
	We obtain
	\begin{gather*}
	Z(Ig) = e^{-W/\hbar}\sum_{n\geq 0}P\left(\left((\hbar\Delta \Htpy)^ne^{\Sint/\hbar}\right)I(g)\right) = \\
	= (e^{-W/\hbar}P(1-\hbar\Delta \Htpy)^{-1}e^{\Sint/\hbar})\cdot PI(g) = e^{-W/\hbar}e^{W/\hbar}g = g.
	\end{gather*}
\end{proof}

\begin{proof}[Proof of Theorem~\ref{LEMEqualityOfProjections}]
	Let's evaluate
	$$I_2P_2-\id=\Htpy_2D_2+D_2\Htpy_2$$
	on $f\in \Fun{V}$ and apply $Z$ on both sides. This gives
	$$ZI_2P_2(f)-Z(f) = Z\Htpy_2D_2(f)+ZD_2\Htpy_2(f)\,.$$
	
	Here, $Z\Htpy_2D_2(f)=0$ since $\Htpy$ adds one $\beta$, and hence $\Htpy_2=\Htpy(1-\hbar\Delta \Htpy-\{\Sint,\Htpy\})^{-1}$ too, adds $\beta$, and the result is annihilated by $Z$ due to Lemma~\ref{LEMClaimOne}.
	
	Moreover, $ZI_2P_2(f)=ZIP_2(f)$ since $I_2=I + \Htpy(1-\delta_2 \Htpy)^{-1} \delta_2 I$ and we use the same argument about adding $\beta$ by $\Htpy$ and Lemma~\ref{LEMClaimOne}.
	Using Lemma~\ref{LEMClaimTwo}, we then have
	\begin{gather} \label{EQOneRandom}
	P_2(f)-Z(f) = ZD_2\Htpy_2(f).
	\end{gather}
	To deal with the RHS, we study the expression $ZD_2(f)$:
	\begin{align*}
	Z D_2(f) &= e^{-W/\hbar} P_1[ (Qf+ \hbar\Delta f + \{\Sint, f\})e^{\Sint/\hbar} ]
	\\&= e^{-W/\hbar} P_1[ (Q+ \hbar\Delta  )(fe^{\Sint/\hbar}) ]
	\\&=  e^{-W/\hbar} \hbar \Delta' P_1[ fe^{\Sint/\hbar}]\,,
	\\&=  e^{-W/\hbar} \hbar \Delta' (e^{W/\hbar} Z(f))\,.
	\end{align*}
	Since $\Htpy_2$ always adds at least one $\beta$, we have $ZD_2\Htpy_2(f) \propto \hbar \Delta' (e^{W/\hbar}Z(\Htpy_2(f) ))=0$ by Lemma~\ref{LEMClaimOne}. Equation~\eqref{EQOneRandom} thus gives $Z(f) = P'(f)$.
\end{proof}


Since the perturbation by $\delta_2$ can be obtained from perturbation by $\delta_1$ by twisting with $e^{\Sint/\hbar}$, we expect
that the perturbed differential $E_2$ on $\Fun{H}$ is a twist of $E_1 = \hbar \Delta'$, as in beginning of Section~\ref{SECBVandQL}.
\begin{theorem} \label{THMMain}
	$$E_2 = \hbar \Delta' +\{W,-\}'\,.$$
\end{theorem}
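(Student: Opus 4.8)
The plan is to combine three facts already at our disposal: the chain-map property of the perturbed projection coming from the perturbation lemma (Theorem \ref{THMPL}), the identification $Z=P_2$ from Theorem \ref{LEMEqualityOfProjections}, and the twisting formula of Lemma \ref{LEMex2}. The strategy is to show that both $E_2$ and $\hbar\Delta'+\{W,-\}'$ satisfy the \emph{same} intertwining relation with $P_2$, and then cancel $P_2$ using its surjectivity.

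First I would record what the perturbation lemma gives. Applying Theorem \ref{THMPL} to the perturbation $\delta_2=\hbar\Delta+\{\Sint,-\}$ of the SDR \eqref{EQHTPOnceAgain}, the perturbed projection $P_2$ is a chain map, i.e.
$$E_2\circ P_2 = P_2\circ T_S,\qquad T_S = \{\Sfree,-\}+\hbar\Delta+\{\Sint,-\}.$$
Moreover, part 4 of Theorem \ref{THMPL} tells us the perturbed retract is again an SDR, so in particular $P_2 I_2=\id$ and $P_2$ is surjective. Hence it suffices to prove the single operator identity $(\hbar\Delta'+\{W,-\}')\circ P_2 = P_2\circ T_S$ and then cancel the surjective $P_2$ on the right.

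For that identity I would reuse the computation already carried out inside the proof of Theorem \ref{LEMEqualityOfProjections}. Writing $P_2=Z$ and using that $P_1$ is a chain map together with the step-by-step twisting identity $e^{-\Sint/\hbar}T_{\Sfree}(fe^{\Sint/\hbar})=T_S f$ (following equation \eqref{EQMasterEqSfree}), that proof establishes
$$Z\bigl(T_S f\bigr) = e^{-W/\hbar}\,\hbar\Delta'\bigl(e^{W/\hbar}\,Z(f)\bigr).$$
The new input is to process the right-hand side with Lemma \ref{LEMex2} applied to the BV algebra on $\Fun{\Hlgy}$ with $A=W$: since $W$ solves the quantum master equation there by Theorem \ref{THMCertainComputationII}, the anomaly term $\hbar\Delta' W+\tfrac12\{W,W\}'$ vanishes, and the lemma yields
$$e^{-W/\hbar}\,\hbar\Delta'\bigl(e^{W/\hbar}\,Z(f)\bigr)=\hbar\Delta' Z(f)+\{W,Z(f)\}' = (\hbar\Delta'+\{W,-\}')\,Z(f).$$
Combining the two displays gives $P_2 T_S=(\hbar\Delta'+\{W,-\}')P_2$, which together with $E_2 P_2=P_2 T_S$ produces $E_2 P_2=(\hbar\Delta'+\{W,-\}')P_2$; cancelling the surjective $P_2$ finishes the proof.

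The conceptual content is therefore already in place, and what remains is bookkeeping rather than a genuine obstacle: one must check that $e^{\pm W/\hbar}$ and the relevant power series live in the appropriate completed space so that Lemma \ref{LEMex2} applies (guaranteed by the weight grading and by $W\in\Fun{\Hlgy}$ having only nonnegative powers of $\hbar$, per Theorem \ref{THMCertainComputationII}), and that the cancellation of $P_2$ is legitimate, which rests precisely on the surjectivity $P_2 I_2=\id$ from part 4 of the perturbation lemma. I expect the only mildly delicate point to be matching the heuristic that the normalized effective observable intertwines $\hbar\Delta+\{S,-\}$ with $\hbar\Delta'+\{W,-\}'$ (Section \ref{SECBVandQL}) to the rigorous operators, but since the underlying identity $Z T_S=e^{-W/\hbar}\hbar\Delta'(e^{W/\hbar}Z(-))$ is already proved, this reduces to the single application of Lemma \ref{LEMex2} above.
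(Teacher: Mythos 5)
Your proposal is correct and follows essentially the same route as the paper: both rest on the identity $Z(T_S f)=e^{-W/\hbar}\hbar\Delta'(e^{W/\hbar}Z(f))$ established in the proof of Theorem \ref{LEMEqualityOfProjections}, rewrite the right-hand side as $(\hbar\Delta'+\{W,-\}')Z(f)$ via the twisting formula (valid since $W$ solves the quantum master equation), and then cancel the surjective $P_2$ using $P_2I_2=\id$. Your explicit appeal to Lemma \ref{LEMex2} for the second equality is a slightly more careful rendering of a step the paper states without citation, but the argument is the same.
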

\begin{proof}
	In the proof of Theorem~\ref{LEMEqualityOfProjections}, we showed that
	$$	Z D_2(f) =  e^{-W/\hbar} \hbar \Delta' (e^{W/\hbar} Z(f)) = (\hbar\Delta' + \{W, -\}') Z(f)\,.
	$$
	Since $Z = P_2$, by the perturbation lemma we have
	$$Z D_2 = P_2 D_2 = E_2 P_2$$
	and so 
	$$E_2 P_2 = (\hbar\Delta' + \{W, -\}') P_2\,.$$
	This finishes the proof, since $P_2$ is surjective: by perturbation lemma, $I_2$ is its right inverse.
\end{proof}
\begin{remark}
	This theorem gives another formula for $W$: In the expansion $E_2 = P \delta_2 I + P \delta_2 \Htpy \delta_2 I  + P \delta_2 \Htpy \delta_2 \Htpy \delta_2 I $, the rightmost $\delta_2$ can only act by primed $\Delta$ and bracket and all the other must act by double-primed $\Delta$ and $\{, \}$, to remove $\beta$ that is added by $\Htpy$. The operator $E_2$ is thus equal to $\hbar \Delta' + X$, where $X$ is a vector field. The condition $(E_2)^2 =0$ implies the vector field $X$ is integrable to the form $\{W, -\}'$, where 
	\begin{equation*}
	W = \sum_{k=0}^\infty \frac{1}{\#_\alpha} \circ P \circ (\delta_2 \circ \Htpy)^k \circ \#_\alpha (S)\,.
	\end{equation*}
	Here, $\#_\alpha$ multiplies a monomial by the number of variables $\alpha$ in it. This approach was used in J.P.'s diploma thesis \cite{JPthesis}.
\end{remark}

\subsection{Homotopies}\label{SSECHomotopies}
We will begin by introducing homotopies of quantum $L_\infty$ algebras, following \cite{BraunMaunder, Chuang2010}. Then, we will see that the perturbation lemma directly gives a homotopy between 
the original and the effective action.

Homotopy between two solutions of quantum master equation should interpolate between them. To talk about time dependence, we tensor our space $\Funpos{V}$ with the cdga $\Omega([0, 1])$, the de Rham complex of an interval.
\begin{definition}
	By $\Omega([0, 1])$, we mean the algebra of smooth differential forms on the unit interval $[0, 1]$. Elements of this algebra can be written as $f(t) + g(t)\dd t$, the differential
	$\dd_\mathrm{dR}$ sends such element to $\partial_t f(t) \dd t$.
	
	 \medskip
	 
	 The tensor product $\Funpos{V}\otimes \Omega([0, 1])$ is defined as
	 	$$ \Funpos{V}\otimes \Omega([0, 1]) \equiv \prod_{w  \ge 1} \left[ \left( \bigoplus_{2g+n = w} \hbar^{g} (\dual{V})^{\odot n} \right)\otimes \Omega([0, 1] ) \right] \,, $$
	 	i.e. in each weight, we have coefficients given by differential forms. Since $\fld$ is reals or complex numbers, we can always set $t$ to a number between $0$ and $1$. 
\end{definition}
\begin{remark}
	Taking exponentials and logarithms of elements of $\Funpos{V}\otimes \Omega([0, 1])$ is a well defined operation, since there are only finitely many contributions to each weight. 
	Thus, in each weight, we sum finite number of finite powers of smooth functions of $t$.
	
	We will also define a convex combination as follows
		$$e^{A(t)/\hbar} = (1-t)e^{S_0/\hbar} + t e^{S_1/\hbar}\,.$$
	Here, $A(t)$ is again well defined, because the right hand side starts with $1$ and then contains terms in higher weight which are smooth (linear in fact) in $t$.
\end{remark}
A solution of the QME is given by $e^{S/\hbar}$ closed under $ \{ \Sfree, - \} + \hbar \Delta$. We will thus define homotopy as a degree zero element of $\Funpos{V}\otimes \Omega([0, 1])$, closed under the differential $\{ \Sfree, - \} + \hbar\Delta + \dd_\mathrm{dR}$.
\begin{definition}
	\label{DEFhomotopy}
	We say that $e^{(A(t) + B(t)\dd t)/\hbar}\in \Funpos{V}\otimes \Omega([0, 1])$ is a \textbf{homotopy} between $A(0)$ and $A(1)$ if $A(t)$ is of degree 0, $B(t)$ is of degree -1 and 
	\begin{equation}
	\label{EQhomotopy}
	(\{ \Sfree, - \} + \hbar\Delta + \dd_\mathrm{dR})\left(e^{(A(t) + B(t)\dd t)/\hbar}\right) = 0\,.
	\end{equation}
	This is equivalent to saying that $A(t)$ solves the quantum master equation for every $t$ and that
	\begin{equation}\label{EQHom2}
	\frac{\dd A(t)}{\dd t} + \{ \Sfree, B(t) \} + \{A(t), B(t)\} + \hbar \Delta B(t) = 0\,.
	\end{equation}
\end{definition} 
Costello \cite{Costello2010} shows that such homotopy is equivalent to a symplectic diffeomorphism $\Phi = \Phi(1) : V \to V$ given by the flow of the vector field $X(t) = -\{B(t), -\}$. 

\begin{remark}\label{RKss}
This is a variant of the well-known
correspondence between homotopies and gauge equivalences of Schlessinger and Stasheff \cite{Schlessinger}. In this case, the relevant dgla is $\sus\Funpos{V}$ with $\hbar\Delta + \{\Sfree, - \}$ as a differential and $\{-,-\}$ 
as a bracket. The Maurer-Cartan elements in this dgla are just solutions to the quantum master equation, see Section 6 of \cite{BraunLazarev}.
\end{remark}

There is also another characterization of homotopy, related to the Moser Lemma, which says that $S_0$ and $S_1$ are homotopic iff there the difference $e^{S_0/\hbar} - e^{S_1/\hbar}$ is $(\{ \Sfree, - \}+\hbar\Delta)$ exact. 
\begin{theorem}
	\label{THMhomotopy}
	Let us take two actions $S_0, S_1 \in \Funpos{V}$. Then the following three claims are equivalent:
	\begin{enumerate}
		\item There exists $F\in\Funpos{V}$ such that $e^{S_0/\hbar} - e^{S_1/\hbar} = (\{ \Sfree, - \}+\hbar\Delta)F$
		\item There exists a homotopy in the sense of Definition~\ref{DEFhomotopy} connecting $S_0$ and $S_1$
		\item There is a symplectic diffeomorphism $\Phi$ of $V$, of the form $\id + (\text{terms of positive weight})$, such that 
		\begin{equation} \label{EQHomotopy3}
		e^{(S_\mathrm{free} + S_0)/\hbar}\dd^{\frac 12} V = \Phi^*(e^{(S_\mathrm{free} + S_1)/\hbar}\dd^{\frac 12} V)\,.
		\end{equation}
	\end{enumerate}
\end{theorem}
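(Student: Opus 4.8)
The plan is to prove the two equivalences $(1)\Leftrightarrow(2)$ and $(2)\Leftrightarrow(3)$ separately. Throughout I write $T_{\Sfree}\equiv\hbar\Delta+\{\Sfree,-\}$, which squares to zero by Lemma~\ref{LEMex2} because $\Sfree$ solves the master equation, and I use that $S_0,S_1$, being actions, solve the master equation, so that $e^{S_0/\hbar}$ and $e^{S_1/\hbar}$ are $T_{\Sfree}$-closed. The starting observation is that, since $(\dd t)^2=0$, one has $e^{(A+B\dd t)/\hbar}=e^{A/\hbar}+\tfrac1\hbar e^{A/\hbar}B\,\dd t$, so that \eqref{EQhomotopy} decomposes into its $\dd t$-free part (the statement that $A(t)$ solves the master equation for every $t$) and its $\dd t$-part, which is exactly \eqref{EQHom2} and can be written as $\partial_t e^{A/\hbar}=-T_{\Sfree}\big(\tfrac1\hbar e^{A/\hbar}B\big)$.

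For $(2)\Rightarrow(1)$ I would simply integrate this last identity: setting $F\equiv\int_0^1\tfrac1\hbar e^{A(t)/\hbar}B(t)\,\dd t$ gives $T_{\Sfree}F=-\int_0^1\partial_t e^{A/\hbar}\,\dd t=e^{S_0/\hbar}-e^{S_1/\hbar}$, and $F\in\Funpos{V}$ because $B/\hbar\in\Funpos{V}$ while $e^{A/\hbar}$ starts with $1$. For $(1)\Rightarrow(2)$ I run the same computation backwards, taking for $A(t)$ the convex combination $e^{A(t)/\hbar}\equiv(1-t)e^{S_0/\hbar}+t\,e^{S_1/\hbar}$ (well defined, and $T_{\Sfree}$-closed for each $t$ because $S_0,S_1$ are, so $A(t)$ solves the master equation) and setting $B(t)\equiv\hbar\,e^{-A(t)/\hbar}F$. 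Then $\tfrac1\hbar e^{A/\hbar}B=F$ and $\partial_t e^{A/\hbar}=e^{S_1/\hbar}-e^{S_0/\hbar}=-T_{\Sfree}F$, which is precisely the $\dd t$-part of \eqref{EQhomotopy}; moreover $B(t)$ is of degree $-1$ with $B/\hbar\in\Funpos{V}$, as required.

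For $(2)\Leftrightarrow(3)$ the key is a single Lie-derivative computation. Given $B(t)$, let $\Phi(t)$ be the flow of the time-dependent Hamiltonian vector field $X(t)\equiv-\{B(t),-\}$ (symplectic, since $B(t)$ is a function), with $\Phi(0)=\id$, and put $\mu(t)\equiv e^{(\Sfree+A(t))/\hbar}\dd^{\frac12}V$. Then $\tfrac{\dd}{\dd t}\Phi(t)^*\mu(t)=\Phi(t)^*\big(\partial_t\mu+\mathcal L_{X(t)}\mu\big)$. Using \eqref{EQorigdivergence} in the form $\mathcal L_{X(t)}\dd^{\frac12}V=\Delta B(t)\,\dd^{\frac12}V$ (here $|B(t)|=-1$), the Leibniz rule, and that $X(t)$ acts on functions as $-\{B(t),-\}$, I get $\partial_t\mu+\mathcal L_{X}\mu=\tfrac1\hbar\big(\partial_t A+\{\Sfree,B\}+\{A,B\}+\hbar\Delta B\big)\,\mu$, whose vanishing is exactly \eqref{EQHom2}. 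Hence $\Phi(t)^*\mu(t)$ is $t$-independent iff $(A(t),B(t))$ is a homotopy; evaluating at $t=0$ and $t=1$ then yields \eqref{EQHomotopy3} with $\Phi=\Phi(1)=\id+(\text{positive weight})$, the last because $X(t)$ has positive weight.

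Conversely, for $(3)\Rightarrow(2)$ I would realize a given $\Phi$ as a flow: since $\Phi=\id+(\text{positive weight})$ it equals $\exp(X)$ for a unique positive-weight vector field $X$, which is symplectic and hence, on a vector space, Hamiltonian, $X=-\{B,-\}$; taking $\Phi(t)\equiv\exp(tX)$ and defining $A(t)$ by $e^{(\Sfree+A(t))/\hbar}\dd^{\frac12}V\equiv\Phi(t)_*\big(e^{(\Sfree+S_0)/\hbar}\dd^{\frac12}V\big)$ makes $\Phi(t)^*\mu(t)$ constant by construction, so the same computation gives \eqref{EQHom2}, while \eqref{EQHomotopy3} and injectivity of $\Phi^*$ force $A(1)=S_1$. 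The transformation law \eqref{EQBVTransformation} shows $A(t)$ is again degree $0$ of positive weight, the Berezinian contributing the genus-$1$ term $\tfrac\hbar2\log\Ber(\partial\Phi(t))$ to the exponent. I expect the main obstacle to be precisely this last direction — checking that a unipotent symplectomorphism is the flow of a Hamiltonian vector field and that $A(t)$, $B(t)$, $F$ all stay inside $\Funpos{V}$ or $\Funpos{V}\otimes\Omega([0,1])$ — together with the sign bookkeeping in the semidensity computation; the genuinely geometric input is the vanishing of degree-one symplectic cohomology of $V$ combined with weight-wise convergence.
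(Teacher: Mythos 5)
Your proposal is correct and follows essentially the same route as the paper: the same convex combination $e^{A(t)/\hbar}=(1-t)e^{S_0/\hbar}+t\,e^{S_1/\hbar}$ with $B(t)=\hbar e^{-A(t)/\hbar}F$ (the paper records this as the homotopy $e^{A(t)/\hbar}+F\,\dd t$), the same semidensity Lie-derivative computation for $(2)\Leftrightarrow(3)$, and the same realization of a unipotent symplectomorphism as the flow $\exp(t\log\Phi)$ of a Hamiltonian vector field. The only difference is organizational (you close the cycle via $(1)\Leftrightarrow(2)$ and $(2)\Leftrightarrow(3)$, while the paper proves $(1)\Rightarrow(3)$ directly), which changes nothing of substance.
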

\begin{proof}
	The equivalence of the second and the third claim is from Costello, see Section 10.1 of \cite[Chapter~5]{Costello2010}.
	\medskip 
%
	
	The implication $2. \implies 1. $ is simple, since Equation~\eqref{EQhomotopy} says that 
	$$\frac{\partial}{\partial t} e^{A(t)/\hbar} = - (\{ \Sfree, - \} + \hbar\Delta) (e^{A(t)/\hbar} B(t)/\hbar)\,,$$
	i.e. the change of $e^{A(t)/\hbar}$ is $(\hbar\Delta+\{ \Sfree, - \})$-exact.
	\medskip
	
	To show $1. \implies 3.$, we define 
	$$e^{A(t)/\hbar} \equiv (1-t)e^{S_0/\hbar} + t e^{S_1/\hbar}$$
	and consider a half-density $\mu(t) \equiv e^{(S_\mathrm{free} + A(t))/\hbar}\dd^{\frac 12} V$. Now, let us compute the time derivative of $\mu(t)$
	\begin{align*}
	\dot{\mu}(t) &=  (e^{S_1/\hbar} - e^{S_0/\hbar})e^{\Sfree/\hbar}\dd^{\frac 12} V= - (\{ \Sfree, F \}+\hbar\Delta F ) e^{\Sfree/\hbar}\dd^{\frac 12} V \\ &= - (\hbar\Delta + \{ \Sfree, - \} + \{A(t) , -\})(Fe^{-A(t)/\hbar}) \mu(t)\,,
	\end{align*}
	where we used Equation~\eqref{EQTwistDelta} and the fact that $A(t)$ also satisfies the QME. Last step is using the following version of Equation~\eqref{EQorigdivergence}
	\begin{equation}\label{EQdivergence}
	(\hbar \Delta f + \{G, f\} ) e^{G/\hbar}\dd^{\frac 12} V = (-1)^{\hdeg f} \hbar \mathcal{L}_{\{f, - \}}(e^{G/\hbar}\dd^{\frac 12} V)\,,
	\end{equation}
	for $G\in\Fun{V}$ which is a solution of the quantum master equation.
    Using this, we can write the time derivative of $\mu(t)$ as 
	$$\dot{\mu}(t) = -\mathcal{L}_{\{ -\hbar Fe^{-A(t)/\hbar}, -\}} \mu(t)\,,$$
	i.e. $\mu(t)$ is given by a $\mu(t) = (\Phi_t)_*\mu(0)$, where $\Phi_t$  is the flow of the vector field $\hbar\{- Fe^{-A(t)/\hbar}, -\}$. For $t=1$, we get exactly the claim~3. The homotopy in the sense of Definition~\ref{DEFhomotopy} is explicitly given by 
	$$e^{A(t)/\hbar} + F \dd t \,.$$
\end{proof}
\begin{remark}
	The first condition of Theorem~\ref{THMhomotopy} can be rewritten as 
	$$ e^{(S_\mathrm{free}+S_0)/\hbar} - e^{(S_\mathrm{free}+S_1)/\hbar} = \hbar\Delta (Fe^{S_\mathrm{free}/\hbar})\,.$$
	Multiplying with the volume form $\dd V$ and using $2\Delta f\dd V = (-1)^{\hdeg f}\mathcal{L}_{\{f, - \}}\dd V$ we can write
	$$ e^{(S_\mathrm{free}+S_0)/\hbar}\dd V - e^{(S_\mathrm{free}+S_1)/\hbar}\dd V = -\frac\hbar 2 \dd \left( i_{\{Fe^{S_\mathrm{free}/\hbar},-\}}\dd V\right)\,.$$
	 The above equation then just says that $e^{(S_\mathrm{free}+S_0)/\hbar}\dd V$ and $e^{(S_\mathrm{free}+S_1)/\hbar} \dd V$ lie in the same cohomology class. Thus, the fact that these volume forms are connected by a homotopy is a (graded version) of the Moser Lemma \cite{moser1965volume}.
\end{remark}
\begin{remark} \label{RMKUniqueness}
	From this theorem, one can easily see that homotopic solutions of QME on $V$ integrate to
	homotopic effective actions: if  $e^{S_0/\hbar} - e^{S_1/\hbar} = (\{ \Sfree, - \}+\hbar\Delta)F$, the difference of the effective actions is given by $P_1 (\{ \Sfree, - \}+\hbar\Delta)F = \hbar\Delta'\,P_1 (F)$, which gives a homotopy in $\Fun{\Hlgy}$. Similarly, one can show that two actions which give the same effective action (up to a $\hbar\Delta'$-exact term) are homotopic.
\end{remark}

\subsubsection{Constructing a homotopy between $e^{W/\hbar}$ and $e^{S/\hbar}$}
Now, we would like to find a homotopy between the original and the effective action. Recall that from the SDR obtained after perturbation by $\hbar\Delta$, we have 
$$I_1 P_1 (e^{\Sint/\hbar}) - e^{\Sint/\hbar} = Q_1 \Htpy_1 e^{\Sint/\hbar}  + \Htpy_1 Q_1 e^{\Sint/\hbar}\,.$$
Remembering that $Q_1 = \{ \Sfree, - \} + \hbar\Delta$, $I_1 P_1(e^{\Sint/\hbar} ) = e^{I(W)/\hbar}$ and that $e^{\Sint/\hbar} $ is $Q_1$-closed, we obtain 
$$e^{I(W)/\hbar} - e^{\Sint/\hbar} = Q_1 \Htpy_1(e^{\Sint/\hbar} )\,.$$
Now we can use Theorem~\ref{THMhomotopy} to find a homotopy between these two solutions of the QME: the flow between these two actions is given by the vector field
$$X(t) = -\hbar\{ e^{-A(t)/\hbar} \Htpy_1e^{\Sint/\hbar} , -\}\,.$$
\begin{remark}
	This amounts to a special choice of $F = \Htpy_1(e^{\Sint/\hbar} )$. It is however, a natural one: out of all possible such $F$, it is the one that satisfies $P_1(F) = 0$ and $\Htpy_1(F) = 0$. In other words, because $1 = I_1P_1 - \Htpy_1Q_1 - Q_1\Htpy_1$, we chose an $F$ in the image of the projector $-\Htpy_1Q_1$.
\end{remark}
One can, for example, integrate this flow using the Magnus expansion, which will give us an answer
in the form $\Phi_t = \exp( \{M(t), -\} )$, for degree -1 element $M(t)\in \Funpos{V}\otimes \Omega([0, 1])$ (see Section 3.4.1 in \cite{MagnusBCOR}). The first term of the expansion is 
$$M(t) = \frac{1}{e^{W/\hbar} - e^{\Sint/\hbar}}\hbar \log\left[ 1 + (e^{(W-\Sint)/\hbar} - 1)t \right] \Htpy_1(e^{\Sint/\hbar}) + \dots \,.$$
\begin{remark}
	This linear interpolation works for any standard situation: there is a chain map  $V \to V\otimes \Omega([0, 1])$, given by
	\begin{equation*}
	v \mapsto  (1-t) v + t\, ip(v) - (-1)^{\hdeg{v}} \htpy(v) \dd t\,,
	\end{equation*}
	where we use the notation from Definition~\ref{DEFSS}.
	This map therefore gives a homotopy between $v$ and $ip(v)$ for every closed $v$.
\end{remark}
\subsection{Morphisms}\label{SSECMorphism}
The correct notion of morphisms of quantum $L_\infty$ algebras should come from Lagrangian correspondences (see \cite[remark~2.4.6]{gwilliam-thesis}). 
However, we can define a more restrictive notion, a Poisson map preserving the differentials $T_S$.
\begin{definition}
Given two symplectic vector spaces $(U, \omega_U)$, $(V, \omega_V)$ and solutions of master equation  $S_U \in\Fun{U}$, $S_V \in \Fun{V}$,
we say that a formal map (fixing the origin) $\Phi: U\to V$ is a \textbf{quantum $L_\infty$-morphism} if 
$$\Phi_*(\omega^{-1}_U) = \omega^{-1}_V$$
i.e. if it's a Poisson map, and if 
$$\Phi^* \circ T_{S_V} = T_{S_U} \circ \Phi^*\,,$$
i.e.
\begin{equation}\label{EQMorphdef}
\Phi^* \circ (\hbar\Delta_V f + \{S_V, f\}_V) = \hbar \Delta_U \Phi^* f + \{S_U, \Phi^* f  \}_U\,,
\end{equation}
for any $f \in \Fun{V}$.
\end{definition}
Note that since $\Phi$ is a Poisson map, we have $\dim U \ge \dim V$.

If a quantum $L_\infty$ morphism is a symplectic automorphism of $(V, \omega)$,
it corresponds to a homotopy in the sense of Definition~\ref{DEFhomotopy}.
\begin{lemma}\label{LEMHomMorph}
	A homotopy between two actions $S_0$ and $S_1$, given as a symplectic diffeomorphism $\Phi$, is a quantum $L_\infty$
	isomorphism $\Phi: (V, \omega, \Sfree + S_0) \to (V, \omega, \Sfree + S_1)$. 
	\medskip
	
	Conversely, a quantum $L_\infty$ isomorphism
	$\Phi: (V, \omega, S_0) \to (V, \omega, S_1)$ of the form $\Phi = 1 + (\text{terms of positive weight})$ 
	is also a homotopy between $\Sfree+S_0+C$ and $\Sfree+S_1$, for some $C\in\Funpos{V}$ of polynomial degree 0.
%
%
\end{lemma}
\begin{proof}
	Taking a logarithm of Equation~\eqref{EQHomotopy3}, we get 
	\begin{equation}\label{EQlogmorph}
	\Phi_*(\Sfree + S_0) + \frac 12 \hbar\log \Ber(\Phi) = \Sfree + S_1\,. 
	\end{equation} 
	Using Equation~\ref{EQBVTransformation}, this implies
	\[ \Phi_*  \{ \Sfree +S_0, - \} \Phi^* + \Phi_* \hbar\Delta \Phi^*  = \hbar\Delta + \{\Sfree + S_1 , -\} \,.\]
	For an opposite implication, because the symplectic form is non-degenerate, its center consists of constants. Thus, Equation~\ref{EQMorphdef} implies Equation~\ref{EQlogmorph} up to a constant.
\end{proof}
In the previous section, we have constructed a symplectic diffeomorphism $\Phi \colon V \to V$ which satisfies
$$e^{(S_\mathrm{free} + I(W))/\hbar}\dd^\frac12 V = \Phi^*(e^{(S_\mathrm{free} + \Sint)/\hbar}\dd^\frac12 V)\,. $$
Thus, have from Lemma~\ref{LEMHomMorph} that $\Phi$ is a 
quantum $L_\infty$-morphism $(V, \omega, \Sfree + I(W)) \to (V, \omega, \Sfree + \Sint )$\,.

Now, recall that $p \colon V \to H$ pulls back to the map $I$, i.e. $I(W) =p^*(W)$. It is,
however, also a quantum $L_\infty$ morphism, and we can compose it with $\Phi^{-1}$
to a quantum $L_\infty$ morphism from $(V, \omega, \Sfree + \Sint)$ to the homology.
\begin{lemma}
	Given a quantum $L_\infty$ algebra $(V, \omega, S_1)$, there is a $L_\infty$ morphism from the original 
	quantum $L_\infty$ algebra to its minimal model
	$$p \circ \Phi^{-1} \colon (V, \omega, \Sfree + \Sint) \to (H, \omega_H, W)\,.$$	
\end{lemma}
\begin{proof}
	The projection $p$ is a Poisson map, since
	$\{I(G_1), I(G_2)\} = I(\{G_1, G_2\}')$ for $G_i \in \Fun{H}$ and also satisfies Equation~\eqref{EQMorphdef}, since
	$$
	T_{\Sfree + I(W)} (I(G)) = \hbar \Delta I(G) + \{\Sfree + I(W), I(G)\} = I(\hbar \Delta' G + \{W, G\}')\,,
	$$
	or
	$$T_{\Sfree + I(W)} \circ p^* = p^* \circ T'_{W}\,.$$
\end{proof}
\section{Related works}
\label{SECRemarks}
The connection of the homological perturbation lemma and path integrals is known among experts. It appears most explicitly in a lecture by Carlo Albert \cite{Albert}, but see also remarks by Costello \cite[Chapter 5, Section 2.7]{Costello2010}, Cattaneo, Mnev and Reshetikhin \cite[Theorem 8.1]{CMR}, a paper by Gwilliam and Johnson-Freyd \cite[remark in Section 3]{TJF}. The most detailed reference is an example worked out by Gwilliam in his thesis \cite[Section~2.5]{gwilliam-thesis}, see also the respective subsection. In this section, we explain how our work fits with their.

\subsection{Kajiura}
Kajiura \cite{Kajiura} considers a classical, associative case, the \emph{cyclic $A_\infty$ algebra}. He proves a decomposition theorem, constructing a \emph{cyclic $A_\infty$-isomorphism} between the original algebra and a direct sum of a minimal and a linear contractible $A_\infty$ algebras. The linear contractible algebra contains only the differential and the minimal one has a zero differential, but contains all the higher brackets of the minimal model. The minimal model is constructed iteratively (reminiscing the homological perturbation lemma), giving sums over trees as a result. Our decomposition of the action and the homotopy between $\Sfree + \Sint$ and $\Sfree + W$ is an analogue of this construction in the quantum BV formalism. 

\subsection{Mnev}
Mnev \cite{Mnev2008} defines an effective action using the path integral in the BV formalism. He also shows that small deformations of the Hodge decomposition change the effective action by a \emph{canonical transformation} $W \to W + \{W, R\} + \hbar\Delta R$, which is an infinitesimal version of the usual homotopy from Definition~\ref{DEFhomotopy}.

Mnev also interprets the action as a quantum $L_\infty$ algebra. His BF theory is constructed from a dgla $V_0$ by setting $V = V_0[1] \oplus \dual{V_0}[-2]$, since $V$ then has a canonical odd symplectic structure (the pairing is then of degree 1 in his convention). The dgla is extended onto $V$, the classical master equation is true and $\hbar\Delta S = 0$ iff the original dgla is \emph{unimodular} (the supertrace of the adjoint representation  is zero). Because of this special structure of $V$ (considered also by Barannikov, in the associative case), the Feynman diagrams of the expansion are oriented and there is only a trivalent vertex, with two incoming and one outcoming edge. In this case, the diagrams can only have up to one loop, which means that the effective action has only zeroth and first powers of $\hbar$. 

Mnev calls this first-order action a quantum $L_\infty$ algebra, but it has later been called a \emph{unimodular $L_\infty$ algebra} in a related work of Gran{\aa}ker \cite{Granaker2008}, who interprets the effective action as a minimal model.

\subsection{Costello \& Gwilliam}
In the finite-dimensional case, Costello's propagator $P(0, \infty)$ (see \cite[Section 6.5]{CostelloRBV} or \cite[Chapter 2, Sections 3,4]{Costello2010}) is equal to our propagator. However, Costello defines the Feynman diagrams without the projection and for general propagator $P(\varepsilon, L)$, which in our case would not work -- the exponential $\exp(\hbar\partial_P)$ in Theorem~\ref{THMCertainComputationII} can be reconstructed only if we apply the projection. It would be interesting to see whether one can modify the HPL input data to obtain exponentials in general.

Gwilliam in his thesis \cite{gwilliam-thesis} gives an example of how the HPL gives the Feynman expansion when perturbing by $\hbar\Delta$, as well as constructing the perturbation retract on $\Fun{V}$. This is identical to our Theorem~\ref{THMCertainComputationII} and the preceding construction.

\subsection{Chuang \& Lazarev, Braun \& Maunder}
Chuang  and Lazarev \cite{Chuang2010} construct a minimal model and a homotopy equivalence for any modular operad. The  minimal model is given by a sum over all stable graphs, with propagators given by homotopy $s$ and the form on $V$. In future, we would like to understand the relation of their approach to homotopy to ours.

Braun and Maunder \cite{BraunMaunder} define the path integral explicitly and use it to compute the effective action. They then prove that the effective action again solves a quantum master equation (and hence defines a quantum $L_\infty$ algebra). Moreover, they show that the homotopy classes of quantum $L_\infty$ algebras on $V$ and its cohomology are in bijection and that (in our language) $I(W)$ is homotopic to $\Sint$.

Their path integral coincides with our map $P_1$, which can be seen from the Wick Lemma \cite[Theorem~A.6]{BraunMaunder}: the integral of a monomial is given by a sum over all possible pairings. The propagator is given by the inverse of $\sigma = \langle -, \dd - \rangle$, where $\langle,\rangle$ is their odd symplectic form and $\dd$ is the differential. This is, up to sign conventions, the propagator in our Theorem~\ref{THMCertainComputationII}.

\subsection{M\"unster \& Sachs}
M\"unster and Sachs prove a decomposition theorem in \cite{MuensterSachsClassification} for quantum $L_\infty$ algebras, again by defining it by the Feynman expansion. Their loop homotopy algebra is the same as our quantum $L_\infty$ algebra, but they work in a category of $IBL_\infty$ algebras, which is bigger. They also describe a flow between two quantum $L_\infty$ algebras and use it to show the uniqueness of closed string field theory. This argument, in our language, is contained in Remark~\ref{RMKUniqueness}.

\subsection{Barannikov}
In \cite[Section 4]{BarannikovSolving}, Barannikov gives a general formula for transferring solutions of QME, for any modular operad. For the modular extension of the $L_\infty$ operad, these correspond to the formulas from Theorem~\ref{THMCertainComputationII}. Specifically, the propagator is a composition of the dual scalar product and the homotopy. 

The sum in \cite{BarannikovSolving} is over \emph{stable} graphs, i.e. graphs for which every vertex $v$ has an assigned number $b(v)$ and $2b(v) + n(v) - 2 > 0$, where $n(v)$ is the number of edges adjacent to the vertex. In the graph sum, $b(v)$ corresponds to the power of $\hbar$ and $n(v)$ to the polynomial degree, so the condition $2b(v) + n(v) - 2 > 0$ means we consider only vertices with \emph{weight grading} bigger than 2, which is our condition on $\Sint$.

\subsection*{Acknowledgments}
The research of M.D. and B.J. was supported by grant GAČR P201/12/G028.  B.J. wants to thank MPIM in Bonn for hospitality. J.P. was supported by NCCR
SwissMAP of the Swiss National Science Foundation and had also benefited from support by the project SVV-260089 of the Charles University. We would like to thank the anonymous reviewer, for suggestions  that streamlined the paper considerably, and also for explaining to us what is now Remark~\ref{RKss}. B.J. thanks Martin Markl and Owen Gwilliam for discussions. J.P would like to thank Florian Naef and Pavol Ševera for numerous discussions and Pavel Mnev for useful pointers. Finally, we would like to thank Lada Peksová for her useful comments on an earlier draft of this paper.


\providecommand{\href}[2]{#2}\begingroup\raggedright\endgroup

\end{document}